\newtheorem{proposition}{Proposition}
\newtheorem{assumption}{Assumption}
\newtheorem{theorem}{Theorem}
\newtheorem{lemma}[theorem]{Lemma}
\newtheorem{definition}{Definition}[section]
\DeclareMathOperator*{\argmax}{arg\,max}
\DeclareMathOperator*{\argmin}{arg\,min}
\definecolor{c1}{HTML}{a1c9f4}
\definecolor{c2}{HTML}{ffb482}
\definecolor{c3}{HTML}{8de5a1}
\definecolor{c4}{HTML}{d0bbff}
\definecolor{cback}{HTML}{FBF6D6}
\DeclarePairedDelimiter\floor{\lfloor}{\rfloor}
\title{Efficient Algorithm for Sparse Fourier Transform\\ of Generalized $q$-ary Functions}
\author{Darin Tsui$^{1\dagger}$}
\author{Kunal Talreja$^{1\dagger}$}
\author{Amirali Aghazadeh$^{1, \ast}$}
\affil{$^1$School of Electrical and Computer Engineering,\\ Georgia Institute of Technology\\ $\dagger$Equal contributions\\
*Correspondence to: Amirali Aghazadeh: amiralia@gatech.edu}
\date{\vspace{-5ex}}
\begin{document}
\maketitle

\begin{abstract}
Computing the Fourier transform of a $q$-ary function $f:\mathbb{Z}_{q}^n\rightarrow \mathbb{R}$, which maps $q$-ary sequences to real numbers, is an important problem in mathematics with wide-ranging applications in biology, signal processing, and machine learning. Previous studies have shown that, under the sparsity assumption, the Fourier transform can be computed efficiently using fast and sample-efficient algorithms. However, in most practical settings, the function is defined over a more general space---the space of generalized $q$-ary sequences $\mathbb{Z}_{q_1} \times \mathbb{Z}_{q_2} \times \cdots \times \mathbb{Z}_{q_n}$---where each $\mathbb{Z}_{q_i}$ corresponds to integers modulo $q_i$. Herein, we develop GFast, a coding theoretic algorithm that computes the $S$-sparse Fourier transform of $f$ with a sample complexity of $O(Sn)$, computational complexity of $O(Sn \log N)$, and a failure probability that approaches zero as $N=\prod_{i=1}^n q_i \rightarrow \infty$ with $S = N^\delta$ for some $0 \leq \delta < 1$. We show that a noise-robust version of GFast computes the transform with a sample complexity of $O(Sn^2)$ and computational complexity of $O(Sn^2 \log N)$ under the same high probability guarantees. Additionally, we demonstrate that GFast computes the sparse Fourier transform of generalized $q$-ary functions $8\times$ faster using $16\times$ fewer samples on synthetic experiments, and enables explaining real-world heart disease diagnosis and protein fitness models using up to $13\times$ fewer samples compared to existing Fourier algorithms applied to the most efficient parameterization of the models as $q$-ary functions.
\end{abstract}

\section{Introduction}
Fourier analysis of $q$-ary functions, which map discrete sequences with $q$ alphabets to real numbers, plays a pivotal role in modeling and understanding complex systems across disciplines such as  biology~\cite{wu2016adaptation}, signal processing~\cite{wang2016radar}, and machine learning~\cite{gorji2023walshhadamardregularizer}. Any $q$-ary function $f(\mathbf{m}):\mathbb{Z}_{q}^n\rightarrow \mathbb{R}$ can be expressed in terms of its Fourier transform $F[{\bf k}]$ as,
\begin{equation}
    \label{eq:qary_f}
    f(\mathbf{m}) = \sum_{\mathbf{k} \in \mathbb{Z}_q^n} F[\mathbf{k}]\omega^{\langle \mathbf{m}, \mathbf{k}\rangle} \quad \mathbf{m} \in \mathbb{Z}_q^n,
\end{equation}
where $\omega \coloneqq e^{\frac{2\pi j}{q}}$, and $\mathbf{m}$ and $\mathbf{k}$ are vectors in the $n$-dimensional ring of integers modulo $q$, denoted as $\mathbb{Z}_q^n = \{0,1,\dots,q-1\}^n$. 

Computing the Fourier transform of a $q$-ary function $f$ is a challenging problem. Without any assumptions, this requires obtaining $N=q^n$ samples from $f$ and paying a computational cost of $O(N\log(N))$ to compute the transform using the seminal fast Fourier transform (FFT) algorithm~\cite{cooley1965algorithm}. This algorithm becomes computationally prohibitive for large $q$ or $n$. Fortunately, in practice, most $q$-ary functions have a sparse Fourier transform, meaning $F$ only has a few non-zero coefficients~\cite{aghazadeh2021epistatic,brookes2022sparsity}. The $S$-sparse Fourier transform of a $q$-ary function can be computed using efficient algorithms~\cite{erginbas2023efficiently} with a sample complexity of $O(Sn)$, akin to the compressed sensing theory~\cite{donoho2006cs, baraniukcs2007, candesrobust2006}, and computational complexity of $O(Sn^2\log q)$, significantly faster than sparse regression methods, such as LASSO~\cite{tibshiranilasso1996}. However, despite this progress, in most applications, $f$ is defined over alphabets of different sizes. For instance, consider a model that predicts patient outcomes using $n$ categorical features—such as age group, sex, or chest pain type. This model can only be represented as a function over the space $\mathbb{Z}_{q_1} \times \mathbb{Z}_{q_2} \times \cdots \times \mathbb{Z}_{q_n}$, where $q_i$ reflects the number of categories in feature $i$. Representing this model instead as a $q$-ary function is infeasible, since features often vary in their number of categories (grouping patients by age may require more categories than grouping by sex). As another example, consider a length-$n$ protein sequence, which can be naturally encoded as a $q$-ary sequence in $\mathbb{Z}_{q=20}^n$ for 20 amino acids. When predicting protein fitness from the sequence, it is common to assume only a fraction of the amino acids in each protein site are predictive~\cite{sarkisyan2016local}. In these cases, the function is more effectively modeled using a more generalized notion of $q$-ary function. We call such functions $f:\mathbb{Z}_{q_1} \times \mathbb{Z}_{q_2} \times \cdots \times \mathbb{Z}_{q_n} \rightarrow \mathbb{R}$, which map from the product of spaces with different alphabet sizes to real numbers, \emph{generalized $q$-ary functions}. 


Despite their importance, no efficient algorithm currently exists for computing the Fourier transform of generalized $q$-ary functions. This algorithm would enable several advances, including scalable methods to identify biological interactions in proteins~\cite{brookes2022sparsity,poelwijk2016context}, explain machine learning models~\cite{tsui2024shapzero,tsui2024recovering}, and regularize neural networks in the spectral domain~\cite{aghazadeh2021epistatic,gorji2023walshhadamardregularizer}. Existing workarounds naively require treating the underlying function as $q_\text{max}$-ary with $q_\text{max}=\max_i{q_i}$, which inflates the problem's dimensionality from $\prod_{i=1}^n q_i$ to $q_\text{max}^n$ with no change to the number of non-zero Fourier coefficients $S$. This leads to a substantial increase in sample and computational complexity. For example, modeling protein fitness using a generalized $q$-ary function by focusing on the subset of amino acids predictive of a protein function reduces the problem dimensionality by two orders of magnitude, from $q_\text{max}^{n=12} = 8.9 \times 10^{12}$ to $\prod_{i=1}^{n=12} q_i = 8.6 \times 10^{10}$. This effect is exacerbated as $n$ grows, making the computation of an otherwise feasible Fourier transform inaccessible due to sample and computational overheads. 

Herein, we develop a novel algorithm to efficiently compute the sparse Fourier transform of a generalized $q$-ary function. We can express any generalized $q$-ary function $f$ as: 
\begin{equation}
    f(\mathbf{m}) = \sum_{\mathbf{k} \in \mathbb{Z}_{\mathbf{q}}} F[\mathbf{k}] \prod_{i=1}^{\smash{n}} \omega_{q_i}^{m_i k_i}, \quad \mathbf{m} \in \mathbb{Z}_{\mathbf{q}},
    \label{eq:genq_f}
\end{equation}
where $\mathbf{q} = [q_1, q_2, \dots, q_n]$, $\mathbb{Z}_{\mathbf{q}} = \mathbb{Z}_{q_1} \times \mathbb{Z}_{q_2} \times \cdots \times \mathbb{Z}_{q_n}$ for brevity, and $\omega_{q_i} = e^{\frac{2\pi j}{q_i}}$. In particular, with $q_i = q, \forall i$, Equations (\ref{eq:qary_f}) and (\ref{eq:genq_f}) become identical. This formulation naturally operates directly on alphabets of different sizes.

\subsection{Contributions}
We summarize the main contributions of our work:
\begin{itemize}
    \item We develop GFast, an efficient algorithm that computes the sparse Fourier transform of a generalized $q$-ary function with a sample complexity of $O(Sn)$ and computational complexity of $O(Sn\log N)$ in the noiseless setting. The GFast software is available 
    on our GitHub repository.\footnote{\url{https://github.com/amirgroup-codes/GFast}}
    
    \item We develop a noise-robust version of GFast, dubbed NR-GFast, that computes the transform in the presence of additive Gaussian noise, with a sample complexity of $O(Sn^2)$ and computational complexity of $O(Sn^2 \log N)$.   

    \item We conduct large-scale synthetic experiments and demonstrate that GFast computes the sparse Fourier transform of generalized $q$-ary functions using up to $16\times$ fewer samples and runs up to $8\times$ faster than existing solutions. 

    \item We conduct real-world experiments to explain deep neural networks trained to predict heart disease diagnosis and protein fluorescence and demonstrate that NR-GFast computes the sparse Fourier transform using $13\times$ fewer samples compared to existing algorithms.
    
\end{itemize}

\subsection{Related Work}
Computing the sparse Fourier transform of signals has a rich history, with much of the focus directed toward developing efficient algorithms for the discrete Fourier transform (DFT)~\cite{pawar2017ffast, hassanieh2012nearly, amrollahi2019efficiently, hassanieh2012simple, indyk2014nearly, ghazi2013sample}. These algorithms leverage the principle of aliasing, where subsampling a signal in the time domain results in a linear mixing, or sketching, of the Fourier coefficients. Algorithms have been developed to efficiently sketch (encode) and recover (decode) Fourier coefficients. FFAST~\cite{ pawar2017ffast, pawar2017rffast} achieves this by inducing sparse graph alias codes in the DFT domain through a subsampling strategy guided by the Chinese Remainder Theorem and employing a belief-propagation decoder ~\cite{shokrollahi2004ldpc, ecc} to find the non-zero Fourier coefficients. Inspired by this coding-theoretic framework, algorithms are developed for the problem of computing the sparse Walsh Hadamard transform of pseudo-Boolean functions—a special case of a $q$-ary function with $q=2$~\cite{li2015spright, amrollahi2019efficiently, scheibler2015hadamard, cheraghchi2017walsh}. More recently, the $q$-SFT algorithm~\cite{erginbas2023efficiently} was developed to compute the sparse Fourier transform of a $q$-ary function for any integer $q$. 
GFast departs from all these previous works by directly subsampling in the $\mathbb{Z}_{\mathbf{q}}$ space. The resulting aliasing pattern induces a bipartite graph, which, unlike previous works, has an uneven degree distribution among nodes from different groups.
Our theoretical results in GFast demonstrate that the graph’s uneven degree distribution does not change the algorithm’s convergence guarantees in the asymptotic regime. GFast, in fact, computes the Fourier transform with a sample and time complexity that grows sublinearly in $N$, even in the presence of noise.

\section{Problem Setup}

Consider the generalized $q$-ary function $f({\bf m}):\mathbb{Z}_{\mathbf{q}}\rightarrow \mathbb{R}$ and let $N = \prod_{i =1}^nq_i$. Our goal is to find the Fourier transform of $f$ defined as,
\begin{equation}
F[\mathbf{k}] = \frac{1}{N} \sum_{\mathbf{m}\in\mathbb{Z}_{\mathbf{q}}}f(\mathbf{m})\prod_{i = 1}^n\omega_{q_i}^{-m_i k_i}, \quad 
    \mathbf{k} \in \mathbb{Z}_{\mathbf{q}},
\label{eq:inv_sitedep_fourier}
\end{equation}
when $F[{\bf k}]$ is $S$-sparse. To establish theoretical guarantees, we make the following assumptions.
\begin{assumption}
    \label{sublinear-sparsity}
    Let $\mathcal{S} \coloneqq \text{supp}(F)$ denote the support set of $F$ (i.e., the indices ${\bf k}$ of non-zero Fourier coefficients). We will make the following assumptions:
\begin{enumerate}
    \item Each element in $\mathcal{S}$ is uniformly distributed and chosen randomly across $\mathbb{Z}_{\mathbf{q}}$.
    \item The sparsity $S=|\mathcal{S}|$ is sub-linear in $N$, that is, $S = N^{\delta}$ for some $0 \leq \delta < 1$.
\end{enumerate}
\end{assumption}
\begin{assumption}
    \text{We also assume:} \\
    \vspace{-1.2em}
    \begin{enumerate}
        \label{random-coefficients}
        \item The Fourier coefficients $F[\mathbf{k}]$, for all $\mathbf{k} \in \mathcal{S}$, are sampled uniformly from the finite set $\mathcal{X} \coloneqq \{\rho, \rho \phi, \rho \phi^2 , \cdots , \rho \phi^{\kappa-1}\}$, where $\phi = e^{j\frac{2\pi}{\kappa}}$ for a constant phase offset $\kappa$ and strength factor $\rho$.
        \item The signal-to-noise ratio (SNR) is written as: $\text{SNR} = \frac{||f||^2}{N \sigma^2} = \frac{||F||^2}{\sigma^2} = \frac{S\rho^2}{\sigma^2},$
        and is assumed to be a constant.
        \item We are given access to noisy samples $f(\mathbf{m}) + w(\mathbf{m})$, where $w(\mathbf{m}) \sim \mathcal{C}\mathcal{N}(0, \sigma^2)$ is complex Gaussian noise.
    \end{enumerate}
\end{assumption}

\section{GFast: Noiseless}

We first analyze the case where samples drawn from $f$ are noiseless (i.e., $\sigma^2=0$). GFast finds the non-zero Fourier coefficients in three steps: 1) Subsampling and aliasing, 2) Bin detection, and 3) Peeling.

\subsection{Sampling and Aliasing}

GFast strategically aliases $f$ by computing small Fourier transforms using certain subsampling patterns. To do this, we create $C$ subsampling groups, each of size $b < n$, and take subsampled Fourier transforms over these samples. For each subsampling group $c$, we create the subsampling matrix $\mathbf{M}_c$ as an $n \times b$ matrix with the partial-identity matrix structure, where $b$ rows form an identity matrix and the remaining $n - b$ rows are composed entirely of zeros (see Section~\ref{appendix:example} for a simple example).
For each subsampling group, we additionally denote the length-$b$ vector $\mathbf{b}_c$ as a size $b$ subvector of $\mathbf{q}$,
\begin{equation}
    \label{alphabet-subset}
    \mathbf{b}_c \coloneqq \mathbf{M}_c^T \mathbf{q}, 
\end{equation}
and a set of $P$ offsets $\mathbf{d}_{c,p} \in \mathbb{Z}_{\mathbf{q}}$, where $p \in [P]$. For each $c,p$, the subsampled Fourier coefficients $U_{c,p}[\mathbf{j}]$ indexed by $\mathbf{j} \in \mathbb{Z}_{\mathbf{b}_c}$ are written as:
\begin{equation}
    U_{c,p}[\mathbf{j}] = \frac{1}{B_c} 
    \sum_{\bm{\ell} \in \mathbb{Z}_{\mathbf{b}_c}}f(\mathbf{M}_c\bm{\ell}+\mathbf{d}_{c,p})\prod_{i = 1}^b\omega_{(\mathbf{b}_c)_i}^{-j_i \ell_i}, 
    \quad 
        \mathbf{j} \in \mathbb{Z}_{\mathbf{b}_c},
\label{eq:sitedep_alias_function_p}
\end{equation}
where $B_c = \prod_{i = 1}^b (\mathbf{b}_c)_i.$
In Section \ref{aliasing-proof-section} of the Appendix, we show that this equals:
\begin{equation}
    U_{c, p}[\mathbf{j}] = \sum_{\mathbf{k}: \text{ }\mathbf{M}_c^T \mathbf{k}= \mathbf{j}} F[\mathbf{k}] \prod_{i = 1}^{n} \omega_{q_i}^{(\mathbf{d}_{c,p})_i k_{i}}, \quad 
    \mathbf{k} \in \mathbb{Z}_{\mathbf{q}}.
\label{eq:sitedep_alias_fourier_p}
\end{equation}
Let $\mathbf{s}_{c,\mathbf{k}} \in \mathbb{C}^{P}$ be the vector of offset signatures such that the $p^\text{th}$ value in $\mathbf{s}_{c,\mathbf{k}} $ is $\prod_{i = 1}^{n} \omega_{q_i}^{(\mathbf{d}_{c,p})_i k_{i}}$. By grouping the subsampled Fourier coefficients as $\mathbf{U}_c[\mathbf{j}] = [U_{c,1}[\mathbf{j}], \ldots, U_{c,P}[\mathbf{j}]]$ with their respective offsets stacked as a matrix $\mathbf{D}_c \in \mathbb{Z}_{\mathbf{q}}^{P \times n}$, we rewrite Equation~(\ref{eq:sitedep_alias_fourier_p}) as: 
\begin{equation}
    \mathbf{U}_{c}[\mathbf{j}] =  
    \sum_{\mathbf{k}: \text{ }\mathbf{M}_c^T \mathbf{k}= \mathbf{j}} F[\mathbf{k}] \mathbf{s}_{c,\mathbf{k}} .
\label{eq:sitedep_alias_fourier}
\end{equation}
In the noiseless implementation, we set $P = n$ and $\mathbf{D}_c = \mathbf{I}_{n \times n}$, where $\mathbf{I}_{n \times n}$ is an $n \times n$ identity matrix.%

\begin{algorithm}
\caption{GFast}
\label{alg}
\begin{algorithmic}[1]
\Require $b, C, P, \{\mathbf{M}_c\}_{c \in [C]}, \{\mathbf{D}_c\}_{c \in [C]}$
\State $\hat{F} \leftarrow \emptyset$
\For{each $c$ in $C$}  \Comment{Subsampling}
    \For{each $p$ in $P$}
        \State $\mathbf{}U_{c,p}[\mathbf{j}] = \frac{1}{B_c} \sum_{\bm{\ell} \in \mathbb{Z}_{\mathbf{b}_c}}f(\mathbf{M}_c\bm{\ell}+\mathbf{d}_{c,p})\prod_{i = 1}^b\omega_{(\mathbf{b}_c)_i}^{-j_i \ell_i}$
    \EndFor
\EndFor
\State $\mathcal{S} = \{(c, \mathbf{j}) : \text{Type}(\mathbf{U}_c[{\mathbf{j}}]) = \mathcal{H}_s(\hat{\mathbf{k}}, \hat{F}[\hat{\mathbf{k}}])\}$
\State $L = \{(c, \mathbf{j}) : \text{Type}(\mathbf{U}_c[{\mathbf{j}}]) = \mathcal{H}_m\}$
\While{$(|\mathcal{S}| > 0)$} \Comment{Peeling}
\For{each $c$ in $C$}
    \For{each $\mathbf{j}$ in $\mathbb{Z}_{\mathbf{b}_c}$}
        \If{$\text{Type}(\mathbf{U}_c[\mathbf{j}]) = \mathcal{H}_S (\hat{\mathbf{k}}, \hat{F}[\hat{\mathbf{k}}])$}
                \State $\hat{F} \leftarrow \hat{F} \cup (\hat{\mathbf{k}}, \hat{F}[\hat{\mathbf{k}})$
                \For{each $c'$ in $C$}
                    \State $\mathbf{j}' = \mathbf{M}_{c'}^T \hat{\mathbf{k}}$
                    \State $\mathbf{U}_{c'}[\mathbf{j'}] \leftarrow 
                \mathbf{U}_{c'}[\mathbf{j'}] - \hat{F}[\hat{\mathbf{k}}]\mathbf{s}_{c', \hat{\mathbf{k}}}$
                \State \text{Update }$\mathcal{S}, L$
                \EndFor
        \EndIf
    \EndFor
    \EndFor
\EndWhile
\State Return $\hat{F}$
\end{algorithmic}
\end{algorithm}

\subsection{Bin Detection and Recovery} 
\label{bindetection}
After aliasing $f$, we use the subsampling groups to recover the original Fourier coefficients. Each $\mathbf{U}_{c}[\mathbf{j}]$ is a linear combination of Fourier coefficients $F[\mathbf{k}]$. The Fourier coefficients are recovered using a bin detection procedure, where the objective is to identify which $\mathbf{U}_{c}[\mathbf{j}]$ contains only one Fourier coefficient.
To do this, in addition to the offset matrix $\mathbf{D}_c$, we choose a fixed delay $\mathbf{d}_{c,0} = \mathbf{0}_n$, where $\mathbf{0}_n$ is the vector of all $0$'s of length $n$. This means that $U_{c,0}[\mathbf{j}] = \sum_{\mathbf{k}:\text{ }\mathbf{M}_c^T \mathbf{k}= \mathbf{j}} F[\mathbf{k}]$. Using $U_{c,0}[\mathbf{j}]$, we are able to identify $\mathbf{U}_{c}[\mathbf{j}]$ as a \textit{zero-ton}, \textit{singleton}, or \textit{multi-ton} bin based on the following criteria: 
\begin{itemize}
    \item \textbf{Zero-ton verification:} $\mathbf{U}_{c}[\mathbf{j}]$ is a zero-ton (denoted by $\mathcal{H}_Z$) if $F[\mathbf{k}] = 0$ such that $\mathbf{M}_c^T \mathbf{k}= \mathbf{j}$ (the summation condition in Equation (\ref{eq:sitedep_alias_fourier})). In the presence of no noise, this is true when $U_{c,p}[\mathbf{j}] = 0$ for all $p = 1, \ldots, P$.
       \item \textbf{Singleton verification:} $\mathbf{U}_{c}[\mathbf{j}]$ is a singleton (denoted by $\mathcal{H}_S$) if there exists only one $\mathbf{k}$ where $F[\mathbf{k}] \neq 0$ such that $\mathbf{M}_c^T \mathbf{k}= \mathbf{j}$. This implies that $|U_{c,1}[\mathbf{j}]| = |U_{c,2}[\mathbf{j}]| = \ldots = |U_{c,P}[\mathbf{j}]|$, meaning that $\mathbf{U}_{c}[\mathbf{j}]$ is a singleton if:
    $$
    \left|\frac{U_{c,p}[\mathbf{j}]}{U_{c,0}[\mathbf{j}]}\right| = 1, \quad p = 1, 2, \ldots, P.
    $$
    \item \textbf{Multi-ton verification:} $\mathbf{U}_{c}[\mathbf{j}]$ is a multi-ton (denoted by $\mathcal{H}_M$) if there exists more than one $\mathbf{k}$ where $F[\mathbf{k}] \neq 0$ such that $\mathbf{M}_c^T \mathbf{k}= \mathbf{j}$. This means that:
    $$
    \left|\frac{U_{c,p}[\mathbf{j}]}{U_{c,0}[\mathbf{j}]}\right| \neq 1, \quad p = 1, 2, \ldots, P.
    $$
\end{itemize}
With this, we define the function $\text{Type}(\mathbf{U}_c[\mathbf{j}]) \rightarrow 
\{\mathcal{H}_Z, \mathcal{H}_M, \mathcal{H}_S(\mathbf{k}, F[\mathbf{k}]) \}$, which, given a bin, returns the type, along with a tuple of $(\mathbf{k}, F[\mathbf{k}])$ if the bin is a singleton. If a singleton is found, $U_{c,0}[\mathbf{j}] = F[\mathbf{k}]$. This enables us to use the identity structure of $\mathbf{D}_c$ to solve for $\mathbf{k}$:
\begin{equation}
    \begin{bmatrix}
        \text{arg}_{q_1}[U_{c,1}[\mathbf{j}]/U_{c,0}[\mathbf{j}]] \\
        \vdots \\
        \text{arg}_{q_n}[U_{c,P}[\mathbf{j}]/U_{c,0}[\mathbf{j}]]
    \end{bmatrix} = \mathbf{D}_c \mathbf{k},
\end{equation}
where $\text{arg}_{q_i} \colon \mathbb{C} \rightarrow \mathbb{Z}_{q_i}$  is the $q_i$-quantization of the argument of a complex number defined as,
\begin{equation}
    \label{quantization-formula}
    \text{arg}_{q_i}(z) := \left \lfloor \frac{q_i}{2\pi}\text{arg}
(ze^{\frac{j\pi}{q_i}}) \right \rfloor.
\end{equation}
After obtaining the unknown vector $\mathbf{k}$, the value of $F[\mathbf{k}]$ is directly obtained as $F[\mathbf{k}] = U_{c,0}[\mathbf{j}]$.

\subsection{Peeling Decoder}
\label{peelingdecoder}
Upon determining the bin type, the final step uses a peeling decoder to identify more singleton bins. This problem is modeled with a bipartite graph, where each $F[\mathbf{k}]$ is a variable node, each $\mathbf{U}_c[\mathbf{j}]$ is a check node, and there is an edge between $F[\mathbf{k}]$ and $\mathbf{U}_c[\mathbf{j}]$ when $\mathbf{M}_{c}^T \mathbf{k} = \mathbf{j}$. The graph is a left-$C$ regular sparse bipartite graph with a total of $SC$ nodes. Fig.~\ref{fig:bipartite_graph} shows an illustration of one such bipartite graph. The peeling algorithm is detailed in Algorithm \ref{alg}. Upon identifying a singleton, for every subsampling group $c$ in $C$, we subtract $F[\mathbf{\mathbf{k}}]$ from every $\mathbf{U}_c[\mathbf{M}_{c}^T \mathbf{k}]$. This is equivalent to peeling an edge off the bipartite graph. After looping through all subsampled Fourier coefficients, this procedure is repeated until no more singletons can be peeled. Under Assumptions 1 and 2, we prove the following Theorem: 
\begin{theorem}
    \label{noiseless}
    Given a function $f$ as Equation (\ref{eq:genq_f}) that satisfies Assumptions \ref{sublinear-sparsity} and \ref{random-coefficients}, with each $B_c = O(S)$, $C = O(1)$, $P = n$, and $\mathbf{D}_c = \mathbf{I}_{n \times n}$, there exists some set of $\mathbf{M}_c$ such that GFast recovers all Fourier coefficients in $\text{supp}(F)$ with probability at least $1 - O(1/S)$ with a sample complexity of $O(Sn)$ and computational complexity of $O(Sn\log N)$.
\end{theorem}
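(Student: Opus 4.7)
The plan is to follow the coding-theoretic framework of $q$-SFT and reduce the proof to three claims: (i) in the noiseless regime the \textbf{Type} oracle is exact, (ii) whenever a bin is declared a singleton the recovered pair $(\hat{\mathbf{k}},\hat{F}[\hat{\mathbf{k}}])$ is correct, and (iii) the peeling decoder strips every variable node with the required probability. Together with a straightforward counting of samples and arithmetic operations these will yield the theorem. Since sampling uses the partial-identity matrices of Equation~(\ref{thiscasematrices}) and $\mathbf{D}_c=\mathbf{I}_{n\times n}$, the $c$th group reads every coordinate of $\mathbf{k}$ exactly once in the offsets, so the argument-quantization analysis localizes cleanly to each coordinate $q_i$ via Equation~(\ref{quantization-formula}).

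First I would verify (i) and (ii) deterministically. In the noiseless setting Equation~(\ref{eq:sitedep_alias_fourier}) gives $U_{c,0}[\mathbf{j}]=\sum_{\mathbf{k}:\mathbf{M}_c^T\mathbf{k}=\mathbf{j}}F[\mathbf{k}]$ and $U_{c,p}[\mathbf{j}]=\sum_{\mathbf{k}:\mathbf{M}_c^T\mathbf{k}=\mathbf{j}}F[\mathbf{k}]\,\omega_{q_p}^{k_p}$ when $\mathbf{d}_{c,p}$ is the $p$th standard basis vector. A zero-ton forces every $U_{c,p}[\mathbf{j}]=0$, a singleton yields $|U_{c,p}[\mathbf{j}]/U_{c,0}[\mathbf{j}]|=1$ for all $p$, and a genuine multi-ton violates at least one of these equalities by a triangle-inequality argument (equality of moduli would require all summands to be colinear, contradicting distinct $\mathbf{k}$'s with independent phase factors). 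Assuming these unit-modulus equalities hold, $\arg_{q_i}(U_{c,i}[\mathbf{j}]/U_{c,0}[\mathbf{j}])=k_i$ by the definition of the quantizer, so $\hat{\mathbf{k}}$ is recovered coordinatewise and $\hat{F}[\hat{\mathbf{k}}]=U_{c,0}[\mathbf{j}]$ follows.

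The main obstacle is (iii), the probabilistic peeling analysis. Since each $\mathbf{M}_c$ is a disjoint partial identity and $\mathcal{S}$ is uniform over $\mathbb{Z}_{\mathbf{q}}$ by Assumption~\ref{sublinear-sparsity}, the projections $\mathbf{M}_c^T\mathbf{k}$ are uniform on $\mathbb{Z}_{\mathbf{b}_c}$ and are mutually independent across $c$. Therefore the induced bipartite graph with $S$ left nodes (coefficients) and $\sum_c B_c$ right nodes (bins) is a left $C$-regular random graph whose edge distribution is ensemble-equivalent to the one analyzed in $q$-SFT. The plan is to quote the density-evolution / $2$-core threshold argument from that framework: choosing $B_c=\eta S$ for a sufficiently large constant $\eta$ and $C$ equal to a constant (e.g.\ $C=3$) above the corresponding peeling threshold ensures the $2$-core is empty with probability at least $1-O(1/S)$ as $S=N^{\delta}\to\infty$. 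The only adaptation needed is verifying that non-uniform alphabets $q_i$ do not break the uniformity of the hashing, which follows from the disjoint block structure of $\{\mathbf{M}_c\}$.

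Finally the complexity bookkeeping: each of $C=O(1)$ groups draws $(P+1)B_c=(n+1)\cdot O(S)=O(Sn)$ samples, giving sample complexity $O(Sn)$. Computing the $B_c$-point generalized DFT in each group costs $O(B_c\log B_c)=O(S\log S)$ per offset, so aliasing is $O(nS\log S)$; since $S=N^{\delta}$, $\log S=\Theta(\log N)$, yielding $O(Sn\log N)$. Peeling touches $O(S)$ singletons, each incurring $O(Cn)=O(n)$ work for the coordinatewise quantizer and the $C$ bin updates, for an additional $O(Sn)$. Summing the dominant aliasing cost gives computational complexity $O(Sn\log N)$, completing the theorem.
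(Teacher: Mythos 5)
Your overall architecture matches the paper's: deterministic bin classification and singleton recovery in the noiseless regime, a random-graph peeling argument for the success probability, and a direct count of samples and arithmetic for the complexities. Items (i), (ii), and the final bookkeeping are essentially what the paper does (with the caveat that your ``triangle-inequality'' dismissal of false singletons is not actually deterministic---it relies on the random phases of Assumption~2---but the paper is no more explicit on this point).

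The gap is in step (iii). You assert that the induced bipartite graph is ``ensemble-equivalent to the one analyzed in $q$-SFT'' and that one may simply quote its density-evolution threshold after ``choosing $B_c=\eta S$ for a sufficiently large constant $\eta$.'' Neither claim is available here. The bin count $B_c=\prod_i(\mathbf{b}_c)_i$ is \emph{determined} by the alphabet sizes falling in block $c$, so the $B_c$ are in general unequal across groups and cannot be set to a common $\eta S$; one can only write $B_c=\eta_c S$ with group-dependent redundancies $\eta_c$. Consequently the check-node degree distribution is a Poisson law with a \emph{different} mean $1/\eta_c$ in each group, the graph is not the homogeneous left-regular ensemble of $q$-SFT, and the threshold argument cannot be quoted verbatim. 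This heterogeneity is precisely what the paper's appendix spends its effort on: a per-group degree distribution (Lemma~\ref{lemma: poisson-distribution-lemma}), a density-evolution recursion $p_{c',i}=\prod_{c\neq c'}(1-e^{-p_{c,i-1}/\eta_c})$ whose convergence must be re-established for the vector $\{\eta_c\}$, an Azuma/martingale concentration bound whose finite differences must be re-controlled because the check degrees are irregular (Lemma~\ref{azuma-finite-difference}), a tree-neighborhood bound taken with respect to the worst-case mean $\max_c 1/\eta_c$, and an expander bound involving $\prod_c(1/\eta_c)$. Your one-sentence reduction (``the only adaptation needed is verifying that non-uniform alphabets do not break the uniformity of the hashing'') skips the actual content of the proof; uniform hashing within each group is the easy part, while the group-to-group heterogeneity of $B_c$ is the part that must be, and in the paper is, reworked.
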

\begin{proof}
    The proof is in Section \ref{noiseless-full-proof} of the Appendix.
\end{proof}

\section{GFast: Noise Robust}

We develop a version of GFast that accounts for the presence of noise, where $\sigma^2 > 0$, dubbed NR-GFast. Instead of constraining $P = n$ and $\mathbf{D}_c = \mathbf{I}_{n \times n}$, given the hyperparameter $P_1$, we let $P = P_1 n$ and $\mathbf{D}_c$ be a matrix in $\mathbb{Z}_{\mathbf{q}}^{P \times n}$ where the entries of column $i$ are chosen uniformly at random over $\mathbb{Z}_{q_i}.$ For simplicity, we drop the subsampling group index $c$ and index $\mathbf{j}$ from $\mathbf{U}_{c}[\mathbf{j}]$ and write $\mathbf{U} = [\mathbf{U}_1, \cdots \mathbf{U}_p]^T$. For each group, let $\mathbf{S} = [\cdots \mathbf{s_k} \cdots] \in \mathbb{C}^{P \times N}$ be the matrix of offset signatures. In the presence of noise, $\mathbf{U} = \mathbf{S}\boldsymbol{\alpha} + \mathbf{W}$, where $\boldsymbol{\alpha}[\mathbf{k}] = F[\mathbf
k]$ if $\mathbf{M}^T{\mathbf{k}} = \mathbf{j}$, otherwise $\boldsymbol{\alpha}[\mathbf{k}] = 0$. $\mathbf{W}$ is a matrix of complex multivariate Gaussian noise with zero mean and covariance $\nu_c^2 \mathbf{I}$, such that $\nu_c^2 = \sigma^2/B_c$.  

NR-GFast uses random offsets that individually recover each $\mathbf{k}$ index based on repetition coding. Given a set of $P_1$ random offsets $\mathbf{d}_{p \in [P_1]}$ such that $P = P_1 n$, we modulate each offset with each column of the identity matrix such that we have $n$ offsets $\mathbf{d}_{p,r}:$
\begin{equation}
    \label{eq:nr-offset-formula}
    \mathbf{d}_{p,r} = \mathbf{d}_p \oplus_{q_r} \mathbf{e}_r, \forall p \in [P_1] , \quad \forall r \in n.
\end{equation}
This generates $C$ matrices $\mathbf{D}_c \in \mathbb{Z}_{\textbf{q}}^{P \times n}$. We then use a majority test to estimate the $r^\text{th}$ value of $\hat{\mathbf{k}}$, where $\hat{\mathbf{k}}$ is the estimated index of $\mathbf{k}$:
\begin{equation}
    \hat{k}_r = \argmax_{a \in \mathbb{Z}_{q_r}} \sum_{p \in [P_1]}\mathbbm{1}\{a = \text{arg}_{q_r}[U_{p,r}/U_p]\},
\end{equation}
where $\mathbbm{1}$ denotes the indicator function, and $\hat{k}_r$ denotes the value $a \leq q_r$ that is the maximum likelihood estimate of $q_r$. The value of the coefficient $\hat{F}[\hat{\mathbf{k}}]$ is then estimated with $
    \hat{F}[\hat{\mathbf{k}}] = \argmin_{\alpha \in \mathcal{X}} \| \alpha - \mathbf{s}_{\hat{\mathbf{k}}} ^T\mathbf{U}/P \|.$
\begin{figure}[t!]
    \vspace{-0.5cm}
    \centering
    \includegraphics[width=0.5\textwidth]{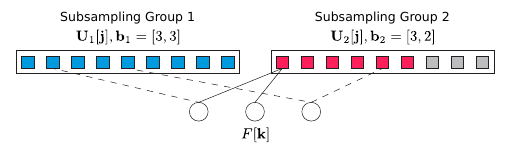}
    \caption{An example of a bipartite graph for computing the sparse Fourier transform of a generalized $q$-ary function with $n=4$, $b=2$, and $\mathbf{q} = [3, 3, 3, 2]$, where $\mathbf{b}_1 = [3, 3]$ (blue squares) and $\mathbf{b}_2 = [3, 2]$ (red squares). Dashed lines represent edges that are peeled in the first iteration of decoding in GFast. Existing algorithms for $q$-ary functions use the $\mathbf{b}_2 = [3, 3]$ (red and gray squares) grouping, which inflates the dimension. }
    \label{fig:bipartite_graph}
\end{figure} 
Then, for some $\gamma \in (0, 1)$, bin detection is modified using the following criteria: 
\begin{itemize}
    \item \textbf{Zero-ton verification:} $\text{Type}(\mathbf{U}_c[\mathbf{j}]) = \mathcal{H}_Z$ if $\frac{1}{P} \| \mathbf{U}_c[\mathbf{j}] \|^2 \leq (1 + \gamma) \nu_c^2$.
    \item \textbf{Singleton verification:} After ruling out zero-tons, we estimate $(\hat{\mathbf{k}}, \hat{F}[\hat{\mathbf{k}}])$, where $\hat{F}[\hat{\mathbf{k}}]$ is the estimated Fourier coefficient corresponding to $\hat{\mathbf{k}}$. We verify if $\mathbf{U}_c[\mathbf{j}]$ is a singleton if $\frac{1}{P} \| \mathbf{U}_c[\mathbf{j}] - \hat{F}[\hat{\mathbf{k}}] \mathbf{s}_{c, \hat{\mathbf{k}}}\|^2 \leq (1 + \gamma) \nu_c^2$.
    \item \textbf{Multi-ton verification:} If neither $\mathrm{Type}(\mathbf{U}_c[\mathbf{j}]) = \mathcal H_Z$ nor $\text{Type}(\mathbf{U}_c[\mathbf{j}]) = \mathcal{H}_S$, then $\text{Type}(\mathbf{U}_c[\mathbf{j}]) = \mathcal{H}_M$.
\end{itemize}
Using the above setup, the random offsets satisfy the following proposition: 
\begin{proposition}
    \label{proposition1}
    Given a singleton bin $(\mathbf{k}, F[\mathbf
    k])$, the $q_r$-quantized ratio as defined in Equation (\ref{quantization-formula}) for 
    \begin{equation}
    \begin{aligned}
        U_p &= F[\mathbf{k}]\prod_{i=1}^n \omega_{q_i}^{(\mathbf{d}_{p})_ik_i} + W_p, \\
        U_{p,r} &= F[\mathbf{k}]\prod_{i=1}^n \omega_{q_i}^{(\mathbf{d}_{p,r})_ik_i} + W_{p,r},
    \end{aligned}
\end{equation}
    satisfies $\text{arg}_{q_r} [U_{p,r}/U_p] = \langle \mathbf{e}_r, \mathbf{k}\rangle \otimes_{q_r} Z_{p, r}$, where $Z_{p, r}$ is a random variable over $\mathbb{Z}_{q_r}$ with a probability $p_i \coloneqq \mathbb{P}(Z_{p, r} = i)$. Let $\sum_{i \neq 0} p_i \coloneqq \mathbb{P}_e$. $\mathbb{P}_e$ is upper bounded by $2e^{-\frac{\zeta}{2}\text{SNR}}$ for $\zeta = \eta_c \sin^2(\pi/2q_r)$.
\end{proposition}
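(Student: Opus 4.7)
The goal is to show that the noisy ratio $U_{p,r}/U_p$ lands within one $q_r$-quantization cell of its noiseless target $\omega_{q_r}^{k_r}$ with probability at least $1-2e^{-\zeta\,\text{SNR}/2}$, so that the modular displacement $Z_{p,r}$ is nonzero only with that probability. My plan is to (i) simplify the signal factor in $U_p$ and $U_{p,r}$ so the ratio carries a clean $\omega_{q_r}^{k_r}$, (ii) reduce $\{Z_{p,r}\ne 0\}$ to a small-noise event, and (iii) apply a complex-Gaussian tail.

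For step (i), because $\mathbf{d}_{p,r}=\mathbf{d}_p\oplus_q\mathbf{e}_r$ by construction in (\ref{eq:nr-offset-formula}), componentwise modular addition in the exponent yields $\prod_{i=1}^n \omega_{q_i}^{(\mathbf{d}_{p,r})_i k_i}=\omega_{q_r}^{k_r}\prod_{i=1}^n \omega_{q_i}^{(\mathbf{d}_p)_i k_i}$. Letting $\phi_p$ denote this common phase factor and using $|F[\mathbf{k}]|=\rho$ from Assumption~\ref{random-coefficients}, the two measurements become $U_p=F[\mathbf{k}]\phi_p+W_p$ and $U_{p,r}=\omega_{q_r}^{k_r}F[\mathbf{k}]\phi_p+W_{p,r}$, so in the noiseless limit $U_{p,r}/U_p=\omega_{q_r}^{k_r}$ and hence $\text{arg}_{q_r}[U_{p,r}/U_p]=k_r=\langle\mathbf{e}_r,\mathbf{k}\rangle$. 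With noise, define $Z_{p,r}\in\mathbb{Z}_{q_r}$ by $\text{arg}_{q_r}[U_{p,r}/U_p]\equiv k_r+Z_{p,r}\pmod{q_r}$; this gives the required identity $\text{arg}_{q_r}[U_{p,r}/U_p]=\langle\mathbf{e}_r,\mathbf{k}\rangle\otimes_{q_r} Z_{p,r}$, and $\mathbb{P}_e=\mathbb{P}(Z_{p,r}\ne 0)$.

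For step (ii), the quantizer outputs $k_r$ iff $\text{arg}(U_{p,r}/U_p)$ lies within $\pi/q_r$ of $2\pi k_r/q_r$. A clean sufficient condition is that each of $U_p,U_{p,r}$ lies inside an angular sector of half-width $\pi/(2q_r)$ around its signal direction, which in turn is implied by $|W_p|\le\rho\sin(\pi/(2q_r))$ and $|W_{p,r}|\le\rho\sin(\pi/(2q_r))$ — the inscribed-disk radius of such a sector is exactly $\rho\sin(\pi/(2q_r))$. A union bound then gives $\mathbb{P}_e\le 2\,\mathbb{P}(|W|>\rho\sin(\pi/(2q_r)))$ for a single $\mathcal{CN}(0,\nu_c^2)$ sample. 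For step (iii), I would apply the tail $\mathbb{P}(|W|^2>t)=e^{-t/\nu_c^2}$ (equivalently, a Mills-type bound on the real perpendicular-to-signal component, which is $N(0,\nu_c^2/2)$), and then substitute $\nu_c^2=\sigma^2/B_c$ and $\text{SNR}=S\rho^2/\sigma^2$ to convert $\rho^2/\nu_c^2$ into $(B_c/S)\,\text{SNR}=\eta_c\,\text{SNR}$. The bound then collapses to $\mathbb{P}_e\le 2e^{-\zeta\,\text{SNR}/2}$ with $\zeta=\eta_c\sin^2(\pi/(2q_r))$, as claimed.

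The main obstacle is the geometric reduction in step (ii): I have to ensure that a small-noise event on each individual measurement actually controls the argument of the ratio $U_{p,r}/U_p$, not just of each factor in isolation, since the division couples the numerator and denominator errors. The inscribed-disk argument decouples them cleanly, and high SNR (Assumption~\ref{random-coefficients}) makes the resulting first-order control tight enough that no extra slack leaks into the exponent. Once this geometric step is verified, steps (i) and (iii) are essentially algebraic, and the result follows from a standard complex-Gaussian tail bound.
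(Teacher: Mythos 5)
Your proof follows essentially the same route as the paper's: write the argument of the ratio as $\tfrac{2\pi}{q_r}k_r$ plus the difference of two angular noise terms, control each term by the event $|W|\le \rho\sin(\pi/(2q_r))$ (the paper's step $\Pr(|Y_p|\ge\alpha)\le\Pr(|W_p|\ge|F[\mathbf{k}]|\sin\alpha)$), union bound over the two measurements, and apply the Rayleigh tail with $\nu_c^2=\sigma^2/B_c$ and $B_c=\eta_c S$ to produce the $\eta_c\sin^2(\pi/2q_r)\,\mathrm{SNR}$ exponent. If anything you are slightly more careful than the paper about the quantization-cell half-width ($\pi/q_r$ for the ratio, hence $\pi/(2q_r)$ per factor) and your exact complex-Gaussian tail actually yields the claimed bound with room to spare.
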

\begin{proof}
    The proof is in Section \ref{proposition1-proof} of the Appendix.
\end{proof}
Using Proposition \ref{proposition1}, we prove the following Theorem:
\begin{theorem}
    \label{noisy}
    Given a function $f$ as Equation (\ref{eq:genq_f}) that satisfies Assumptions \ref{sublinear-sparsity} and \ref{random-coefficients}, with each $B_c = O(S)$, $C = O(1)$, $P = n^2$, $\mathbf{D}_c$ defined as in Equation (\ref{eq:nr-offset-formula}), and if each $q_i = O(1) >1$, there exists some set of $\mathbf{M}_c$ such that NR-GFast recovers all Fourier coefficients in $\text{supp}(F)$ with probability at least  $1 - O(1/S)$ with a sample complexity of $O(Sn^2)$ and computational complexity of $O(Sn^2\log N)$.
\end{theorem}
\begin{proof}
    The proof is in Section \ref{noisy-full-proof} of the Appendix.
\end{proof}

\section{Experiments}

We first evaluate the performance of GFast on synthetic data by considering the scenario where four different generalized $q$-ary functions taking in an $n=20$ length sequence have $q_\text{max}=7$ (the distribution of alphabets is shown in the legend of Fig. \ref{fig:synt_figures}a). We run $q$-SFT on one function by setting $q_i = q_\text{max}, \forall i \in 1 \dots n$.
Following Assumptions \ref{sublinear-sparsity} and \ref{random-coefficients}, we synthetically generate an $S$-sparse Fourier transform $F$, where $\mathcal{S} = \text{supp}(F)$ is chosen uniformly at random in $\mathbb{Z}_{q_1=7} \times \mathbb{Z}_{q_2=3} \times \mathbb{Z}_{q_3=3} \times \mathbb{Z}_{q_4=3} \times \mathbb{Z}_{q_5=4} \times \cdots \times \mathbb{Z}_{q_{20} = 4}$  with values according to $F[\mathbf{k}] = \rho e^{-j \Omega[\mathbf{k}]} \text{ if } \mathbf{k} \in \mathcal{S}, \text{ else } 0$,
where $\Omega[\mathbf{k}]$ are independent and random variables sampled from $[0, 2\pi)$. We evaluate performance using normalized mean-squared error (NMSE), which is calculated as: $\text{NMSE} =  \frac{||\hat{F} - F||^2}{||F||^2} = \frac{||\hat{f} - f||^2}{||f||^2}$, where $\hat{f}$ is the estimated function using the recovered Fourier coefficients. In all experiments, we set $C = \floor{n/b}$.
Fig. \ref{fig:synt_figures}a and \ref{fig:synt_figures}b compare GFast and NR-GFast with the noiseless and noise-robust versions of $q$-SFT in terms of sample complexity, respectively. We utilize $q$-SFT's partial-identity subsampling matrix implementation to match the aliasing pattern derived in Equation (\ref{eq:sitedep_alias_fourier_p}). In the noiseless setting, $\rho=1$, $S=1000$, and $\sigma^2$ such that $\text{SNR}=100$dB. To vary the sampling patterns of $\mathbf{q}$ in the noiseless implementation of GFast, we additionally permute $\mathbf{q}$ and each vector of $\mathbf{k}$ five times per experimental instance. For both algorithms, we vary $b$ from 1 to 5 to get different sample counts. In the noisy setting, we set $\rho=1$, $S=1000$, and $\sigma^2$ such that $\text{SNR}=10$dB. We vary $b$ from 1 to 5 and $P_1$ from 18 to 20 in both algorithms. Each $q$-SFT and GFast experimental instance is run over three random seeds. When GFast achieves an NMSE of less than $0.01$, there is an average difference of $0.96 \pm 0.06$ and $0.92 \pm 0.06$ in NMSE at the closest sample complexity of $q$-SFT in the noiseless and noisy settings, respectively. When both GFast and $q$-SFT achieve an NMSE of less than $0.01$, GFast uses up to $16\times$ fewer samples and runs up to $8\times$ faster.

\begin{figure}[t!]
\centering
\includegraphics[width=0.5\textwidth]{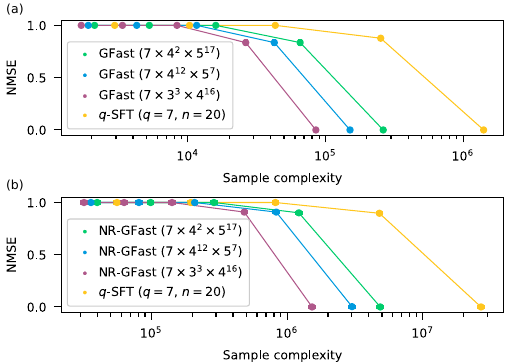}
\vspace{0cm}
\caption{NMSE of GFast compared to $q$-SFT on synthetically generated data, where the Fourier coefficients live in $\mathbb{Z}_{q_1=7} \times \mathbb{Z}_{q_2=3} \times \mathbb{Z}_{q_3=3} \times \mathbb{Z}_{q_4=3} \times \mathbb{Z}_{q_5=4} \times \cdots \times \mathbb{Z}_{q_{20} = 4}$, in the \textbf{(a)} noiseless and \textbf{(b)} noisy settings at various sample complexities. When achieving an NMSE of less than $0.01$, GFast uses up to $16\times$ fewer samples and runs up to $8\times$ faster than $q$-SFT.}
\vspace{-0.5cm} 
\label{fig:synt_figures}
\end{figure}

We additionally show the performance of NR-GFast in learning a real-world function. We train a multilayer perceptron $f$ to predict the presence of heart disease using data from~\cite{janosi1989heart}. $f$ takes $n=12$ categorical features as an input. Fig.~\ref{fig:tabular_figures}a shows the number of categories per feature, where the maximum alphabet size is $q=5$. Treating this problem as a $q$-ary function with $q=5$ artifically inflates the problem's dimensionality from $\prod_{i=1}^{n=12} q_i = 4.5 \times 10^{6}$ to $q^n = 5^{12} \sim 2.4 \times 10^{8}$. Moreover, evaluating $f$ on out-of-domain inputs $\mathbf{x} \in \mathbb{Z}_q^n \setminus \mathbb{Z}_{\mathbf{q}}$ unnecessarily increases sample complexity, since computing $f(\mathbf{x})$ requires freezing the input features $\mathcal{V} = \left\{ i \in \{1, \dots, n\} \mid x_i \in \mathbb{Z}_{q_i}] \right\}$ that exist in $\mathbb{Z}_{\mathbf{q}}$, and taking an expectation over the remaining samples:
$f(\mathbf{x}) = \mathbb{E}_{\mathbf{m} \sim \mathrm{Unif}(\mathbb{Z}_{\mathbf{q}})} \left[ f(\mathbf{m}) \,\middle|\, \mathbf{m}_{\mathcal{V}} = \mathbf{x}_{\mathcal{V}} \right]$. We run NR-GFast over inputs in $\mathbb{Z}_{\mathbf{q}}$ and $q$-SFT using $q=5$ and $n=12$, and calculate the test NMSE of both algorithms on $10,000$ samples chosen uniformly and randomly from $\mathbb{Z}_{\mathbf{q}}$. We vary $b$ from 1 to 3 and $P_1$ from $3$ to $12$ in both algorithms to get different sample complexities.

Fig. \ref{fig:tabular_figures}b compares NR-GFast with $q$-SFT. Despite NR-GFast taking $13\times$ fewer samples than $q$-SFT, NR-GFast achieves a minimal NMSE of $0.37$ compared to $q$-SFT achieving a minimal NMSE of $0.42$. Given a sample budget of $10^5$ samples, $q$-SFT achieves a significantly worse minimal NMSE ($0.53$) than NR-GFast. On an MLP trained to predict protein floresence (see Fig. \ref{fig:gfp_figures} in Section~\ref{appendix:results} in the Appendix), NR-GFast again outperforms $q$-SFT given a sample budget (with an NMSE of $0.42$ and $0.54$, respectively). Together, these results highlight the sample efficiency of NR-GFast in computing the sparse Fourier transform of generalized $q$-ary functions. 

\begin{figure}[t!]
\centering
\includegraphics[width=0.5\textwidth]{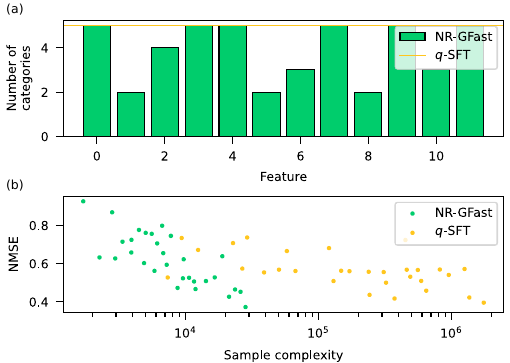}
\vspace{0cm}
\caption{\textbf{(a)} Distribution of categories per feature that comprise the generalized $q$-ary function in NR-GFast for $n=12$ compared to the $q$-ary function where $q=5$. \textbf{(b)} Performance of NR-GFast and $q$-SFT on the presence of heart disease as predicted by a trained MLP. NR-GFast uses up to $13\times$ fewer samples than $q$-SFT.}
\label{fig:tabular_figures}
\end{figure}

\section{Conclusion and Future Work}
In this paper, we introduced GFast, a fast and efficient algorithm for computing the sparse Fourier transform of generalized $q$-ary functions. Our theoretical analysis and empirical experiments demonstrate that GFast outperforms existing algorithms limited to $q$-ary functions by achieving faster computation and requiring fewer samples, especially when the alphabet size distribution is non-uniform. 

Several promising research directions remain to be explored. For example, our theoretical analysis with noise assumes bounded $q_i$, and our experiments are limited to alphabet sizes up to $12$. A significant increase in $q$, for example, to several thousand in language processing applications, will degrade GFast's performance. Developing algorithms that are robust to large alphabets is an important future challenge. Additionally, leveraging GFast to develop methods for explaining large-scale machine learning models of generalized $q$-ary functions—such as those used in biology and natural language processing—offers another exciting avenue for further research.

\balance
\bibliography{references.bib}

\begin{thebibliography}{10}

\bibitem{wu2016adaptation}
Nicholas~C. Wu, Lei Dai, C.~Anders Olson, James~O. Lloyd-Smith, and Ren Sun.
\newblock Adaptation in protein fitness landscapes is facilitated by indirect paths.
\newblock {\em Elife}, 5:e16965, 2016.

\bibitem{wang2016radar}
Shaogang Wang, Vishal~M. Patel, and Athina Petropulu.
\newblock {RSFT}: A realistic high dimensional sparse {F}ourier transform and its application in radar signal processing.
\newblock In {\em MILCOM 2016 - 2016 IEEE Military Communications Conference}, pages 888--893, 2016.

\bibitem{gorji2023walshhadamardregularizer}
Ali Gorji, Andisheh Amrollahi, and Andreas Krause.
\newblock A scalable {Walsh-Hadamard} regularizer to overcome the low-degree spectral bias of neural networks.
\newblock In {\em Uncertainty in Artificial Intelligence}, pages 723--733. PMLR, 2023.

\bibitem{cooley1965algorithm}
James~W. Cooley and John~W. Tukey.
\newblock An algorithm for the machine calculation of complex {F}ourier series.
\newblock {\em Mathematics of Computation}, 19(90):297--301, 1965.

\bibitem{aghazadeh2021epistatic}
Amirali Aghazadeh, Hunter Nisonoff, Orhan Ocal, David~H. Brookes, Yijie Huang, O.~Ozan Koyluoglu, Jennifer Listgarten, and Kannan Ramchandran.
\newblock Epistatic {N}et allows the sparse spectral regularization of deep neural networks for inferring fitness functions.
\newblock {\em Nature {C}ommunications}, 12(1):5225, 2021.

\bibitem{brookes2022sparsity}
David~H. Brookes, Amirali Aghazadeh, and Jennifer Listgarten.
\newblock On the sparsity of fitness functions and implications for learning.
\newblock {\em Proceedings of the National Academy of Sciences}, 119(1):e2109649118, 2022.

\bibitem{erginbas2023efficiently}
Yigit~E. Erginbas, Justin~S. Kang, Amirali Aghazadeh, and Kannan Ramchandran.
\newblock Efficiently computing sparse {F}ourier transforms of $q$-ary functions.
\newblock In {\em 2023 IEEE International Symposium on Information Theory (ISIT)}, pages 513--518. IEEE, 2023.

\bibitem{donoho2006cs}
David~L. Donoho.
\newblock Compressed sensing.
\newblock {\em IEEE Transactions on Information Theory}, 52(4):1289--1306, 2006.

\bibitem{baraniukcs2007}
Richard~G. Baraniuk.
\newblock Compressive sensing [lecture notes].
\newblock {\em IEEE Signal Processing Magazine}, 24(4):118--121, 2007.

\bibitem{candesrobust2006}
Emmanuel~J. Candès, Justin Romberg, and Terence Tao.
\newblock Robust uncertainty principles: exact signal reconstruction from highly incomplete frequency information.
\newblock {\em IEEE Transactions on Information Theory}, 52(2):489--509, 2006.

\bibitem{tibshiranilasso1996}
Robert Tibshirani.
\newblock Regression shrinkage and selection via the {Lasso}.
\newblock {\em Journal of the Royal Statistical Society: Series B (Methodological)}, 58(1):267--288, 12 2018.

\bibitem{sarkisyan2016local}
Karen Sarkisyan, Dmitry Bolotin, Margarita Meer, Dinara Usmanova, Alexander Mishin, George Sharonov, Dmitry Ivankov, Nina Bozhanova, Mikhail Baranov, Onuralp Soylemez, et~al.
\newblock Local fitness landscape of the green fluorescent protein.
\newblock {\em Nature}, 533(7603):397--401, 2016.

\bibitem{poelwijk2016context}
Frank Poelwijk, Vinod Krishna, and Rama Ranganathan.
\newblock The context-dependence of mutations: a linkage of formalisms.
\newblock {\em PLoS Computational Biology}, 12(6):e1004771, 2016.

\bibitem{tsui2024shapzero}
Darin Tsui, Aryan Musharaf, Yigit~E. Erginbas, Justin~S. Kang, and Amirali Aghazadeh.
\newblock {SHAP} zero explains all-order feature interactions in black-box genomic models with near-zero query cost.
\newblock {\em arXiv preprint arXiv:2410.19236}, 2024.

\bibitem{tsui2024recovering}
Darin Tsui and Amirali Aghazadeh.
\newblock On recovering higher-order interactions from protein language models.
\newblock {\em arXiv preprint arXiv:2405.06645}, 2024.

\bibitem{pawar2017ffast}
Sameer Pawar and Kannan Ramchandran.
\newblock {FFAST}: An algorithm for computing an exactly $k$-sparse {DFT} in ${O}(k \log k)$ time.
\newblock {\em IEEE Transactions on Information Theory}, 64(1):429--450, 2017.

\bibitem{hassanieh2012nearly}
Haitham Hassanieh, Piotr Indyk, Dina Katabi, and Eric Price.
\newblock Nearly optimal sparse fourier transform.
\newblock In {\em Proceedings of the forty-fourth annual ACM symposium on Theory of computing}, pages 563--578, 2012.

\bibitem{amrollahi2019efficiently}
Andisheh Amrollahi, Amir Zandieh, Michael Kapralov, and Andreas Krause.
\newblock Efficiently learning {F}ourier sparse set functions.
\newblock In {\em Advances in Neural Information Processing Systems}, volume~32, 2019.

\bibitem{hassanieh2012simple}
Haitham Hassanieh, Piotr Indyk, Dina Katabi, and Eric Price.
\newblock Simple and practical algorithm for sparse {F}ourier transform.
\newblock In {\em Proceedings of the twenty-third annual ACM-SIAM symposium on Discrete Algorithms}, pages 1183--1194. Society for Industrial and Applied Mathematics, 2012.

\bibitem{indyk2014nearly}
Piotr Indyk, Michael Kapralov, and Eric Price.
\newblock {(Nearly)} sample-optimal sparse {F}ourier transform.
\newblock In {\em Proceedings of the Twenty-Fifth Annual ACM-SIAM Symposium on Discrete Algorithms}, pages 480--499. Society for Industrial and Applied Mathematics, 2014.

\bibitem{ghazi2013sample}
Badih Ghazi, Haitham Hassanieh, Piotr Indyk, Dina Katabi, Eric Price, and Lixin Shi.
\newblock Sample-optimal average-case sparse {F}ourier transform in two dimensions.
\newblock In {\em 2013 51st Annual Allerton Conference on Communication, Control, and Computing (Allerton)}, pages 1258--1265. IEEE, 2013.

\bibitem{pawar2017rffast}
Sameer Pawar and Kannan Ramchandran.
\newblock {R-FFAST:} a robust sub-linear time algorithm for computing a sparse {DFT}.
\newblock {\em IEEE Transactions on Information Theory}, 64(1):451--466, 2017.

\bibitem{shokrollahi2004ldpc}
Amin Shokrollahi.
\newblock {LDPC} codes: An introduction.
\newblock In Keqin Feng, Harald Niederreiter, and Chaoping Xing, editors, {\em Coding, Cryptography and Combinatorics}, pages 85--110, Basel, 2004. Birkh{\"a}user Basel.

\bibitem{ecc}
Michael~G. Luby, Michael Mitzenmacher, Mohammad~Amin Shokrollahi, and Daniel~A. Spielman.
\newblock Efficient erasure correcting codes.
\newblock {\em IEEE Transactions on Information Theory}, 47(2):569--584, 2001.

\bibitem{li2015spright}
Xiao Li, Joseph~K. Bradley, Sameer Pawar, and Kannan Ramchandran.
\newblock The {SPRIGHT} algorithm for robust sparse {H}adamard transforms.
\newblock In {\em 2014 {IEEE} International Symposium on Information Theory}, pages 1857--1861. {IEEE}, 2014.

\bibitem{scheibler2015hadamard}
Robin Scheibler, Saeid Haghighatshoar, and Martin Vetterli.
\newblock A fast {H}adamard transform for signals with sublinear sparsity in the transform domain.
\newblock {\em IEEE Transactions on Information Theory}, 61(4):2115--2132, 2015.

\bibitem{cheraghchi2017walsh}
Mahdi Cheraghchi and Piotr Indyk.
\newblock Nearly optimal deterministic algorithm for sparse walsh-hadamard transform.
\newblock {\em ACM Transactions on Algorithms (TALG)}, 13(3):1--36, 2017.

\bibitem{janosi1989heart}
Andras Janosi, Walter Steinbrunn, Matthias Pfisterer, and Robert Detrano.
\newblock Heart disease.
\newblock \url{https://doi.org/10.24432/C52P4X}, 1989.
\newblock [Dataset]. UCI Machine Learning Repository.

\bibitem{guruswami2006ldpc}
Venkatesan Guruswami.
\newblock Iterative decoding of low-density parity check codes (a survey).
\newblock {\em arXiv preprint cs/0610022}, 2006.

\bibitem{pedarsani2017phasecode}
Ramtin Pedarsani, Dong Yin, Kangwook Lee, and Kannan Ramchandran.
\newblock {PhaseCode}: Fast and efficient compressive phase retrieval based on sparse-graph codes.
\newblock {\em IEEE Transactions on Information Theory}, 63(6):3663--3691, 2017.

\bibitem{richardson2001capacity}
Thomas~J. Richardson and Rüdiger~L. Urbanke.
\newblock The capacity of low-density parity-check codes under message-passing decoding.
\newblock {\em IEEE Transactions on Information Theory}, 47(2):599--618, 2001.

\end{thebibliography}
\bibliographystyle{unsrt}

\section{Additional Results on Protein Fluorescence}
\label{appendix:results}


We consider a multilayer perceptron (MLP) trained to predict fluorescence values of green fluorescence protein amino acid sequences~\cite{sarkisyan2016local}, where we aim to learn the function $f$ by mutating the first $n=12$ amino acids of the sequence. Using the weights of the first layer, where each amino acid is one-hot encoded as a vector of length 20 (for 20 amino acids), we identify the amino acids at each position most predictive of fluorescence by applying a threshold, which we set to 0.05. 
For a uniform alphabet, the vocabulary set $\mathcal{A}$ can be written as $\{\alpha_1, \alpha_2, \cdots , \alpha_n \}$ where $|\mathcal{A}| = q$ for all $i.$ However, given the weights $\mathbf{W}^{n \times q}$ of the first layer of the fitness model, the vocabulary set can be found for each site as $$\mathcal{A}_i = \{\alpha \in A: \mathbf{W}_{i, a} \geq \tau \}, |\mathcal{A}_i| = q_i,$$ where we set $\tau = 0.05$. 

Fig. \ref{fig:gfp_figures}a shows the distribution of predictive amino acids, where the maximum alphabet size is $q=12$. Similar to the heart disease problem, treating this problem as a generalized $q$-ary function allows us to reduce the dimensionality of the problem from $q^n =12^{12}\sim8.9 \times 10^{12}$ down to $8.6 \times 10^{10}$. We run NR-GFast on the most predictive amino acids in $\mathbb{Z}_{\mathbf{q}}$ and $q$-SFT using $q=12$ and $n=12$, again calculating the test NMSE of both algorithms on a $10,000$ samples chosen uniformly and randomly from $\mathbb{Z}_{\mathbf{q}}$. We vary $b$ from 1 to 3 and $P_1$ from $5$ to $20$ in both algorithms to get different sample complexities.

In Fig. \ref{fig:gfp_figures}b, given a sample budget of $10^6$ samples, NR-GFast achieves a minimal NMSE of $0.42$, while $q$-SFT achieves a minimal NMSE of $0.54$. Despite $q$-SFT running at sample complexities up to $10^7$, the minimal NMSE ($0.49$) is still worse than NR-GFast. These results highlight the efficiency of NR-GFast in problem setups where leveraging the predictive alphabet space in $f$ can reduce sample complexity.

\begin{figure}[h]
\centering
\includegraphics[width=0.5\textwidth]{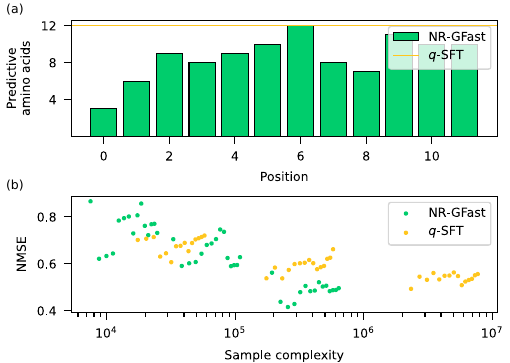}
\vspace{0cm}
\caption{\textbf{(a)} Distribution of predictive amino acids per position that comprise the generalized $q$-ary function in NR-GFast for $n=12$ compared to the $q$-ary function where $q=12$. \textbf{(b)} Performance of NR-GFast and $q$-SFT on florescence in protein sequences as predicted by a trained MLP. Given a sample budget of $10^6$, NR-GFast achieves a lower minimal NMSE ($0.42$) than $q$-SFT ($0.54$).}
\vspace{-0cm} 
\label{fig:gfp_figures}
\end{figure}
\section{Appendix} 
\subsection{Simple Example}
\label{appendix:example}
We outline a simple example of the subsampling and bin detection stages of GFast on an example function. Consider the following function with $\mathbf{q} = [2, 2, 2, 3]$ that takes in a sequence of length $n=4$: $f(\mathbf{m}) = \sum_{\mathbf{k} \in \mathbb{Z}_{\mathbf{q}}} F[\mathbf{k}] \prod_{i=1}^{\smash{4}} \omega_{q_i}^{m_i k_i}, \quad \mathbf{m} \in \mathbb{Z}_{2} \times \mathbb{Z}_2 \times \mathbb{Z}_2 \times \mathbb{Z}_3,$ such that $N = 2 \times 2 \times 2 \times 3 = 24.$ We will set $b = 2$ and $C = 2$ using the following subsampling matrices: $$\mathbf{M}_1 = \begin{bmatrix}
    1 & 0 \\ 0 & 1 \\ 0 & 0 \\ 0 & 0
\end{bmatrix}, \mathbf{M}_2 = \begin{bmatrix}
    0 & 0 \\ 0 & 0 \\ 1 & 0 \\ 0 & 1
\end{bmatrix}.$$
This gives us $\mathbf{b}_1 = [2, 2]$ and $\mathbf{b}_2 = [2, 3].$ For simplicity, we drop the notation $p$ in $U_{c,p}$, and assume $\mathbf{d}_{c,p}$ is a vector of all zeros. We can now compute the subsampled transforms as: $$U_1[\mathbf{j}] = \frac{1}{2 \times 2}\sum_{\boldsymbol{\ell} \in \mathbb{Z}_{\mathbf{b}_1}} f(\ell_1 \ell_2 0 0)\prod_{i = 1}^2\omega_{(\mathbf{b}_1)_i}^{-j_i \ell_i}, \quad U_2[\mathbf{j}] = \frac{1}{2 \times 3}\sum_{\boldsymbol{\ell} \in \mathbb{Z}_{\mathbf{b}_2}} f(0 0 \ell_1 \ell_2)\prod_{i = 1}^2\omega_{(\mathbf{b}_2)_i}^{-j_i \ell_i}.$$ Using Equation (\ref{eq:sitedep_alias_fourier_p}), we can write the subsampled transforms in terms of the linear sums of their Fourier coefficients: \begin{equation}
\begin{aligned}
    \label{eq: big-system-equations}
    U_1[00] &= \sum_{\mathbf{k}:\text{ } \mathbf{M}^T\mathbf{k} = \mathbf{j}} F[\mathbf{k}] = F[0000] + F[0001] + F[0002] + F[0010] + F[0011] + F[0012] \\
    U_1[01] &= \sum_{\mathbf{k}:\text{ } \mathbf{M}^T\mathbf{k} = \mathbf{j}} F[\mathbf{k}] = F[0100] + F[0101] + F[0102] + F[0110] + F[0111] + F[0112] \\
    &\vdots \\
    U_2[12] &= \sum_{\mathbf{k}:\text{ } \mathbf{M}^T\mathbf{k} = \mathbf{j}} F[\mathbf{k}] = F[0012] + F[0112] + F[1012] + F[1112].
\end{aligned}
\end{equation} 
Let us assume that $F$ is $S$-sparse with $S = 5,$ where the Fourier coefficients $F[0100], F[1000], F[0112], F[1011],$ and $F[1111]$ are all non-zero. We can rewrite Equation (\ref{eq: big-system-equations}) as: \begin{equation}
    \begin{aligned}
        U_1[01] = F[0100] + F[0112] \\ U_1[10] = F[1000] + F[0112] \\ U_1[11] = F[1111] \\ U_2[00] = F[0100] + F[1000] \\ U_2[11] = F[1011] + F[1111] \\ U_2[12] = F[0112].
    \end{aligned}
\end{equation}
We will now introduce $\mathbf{d}_{c,p}$. Since our delays are based on the identity matrix, for $c = 1$ and $2$ we have the following offsets: $$\mathbf{d}_{c, 0} = \mathbf{0}_n, \mathbf{d}_{c, 1} = [1, 0, 0, 0]^T, \quad \mathbf{d}_{c, 2} = [0, 1, 0, 0]^T, \quad \mathbf{d}_{c, 3} = [0, 0, 1, 0]^T, \quad \mathbf{d}_{c, 4} = [0, 0, 0, 1]^T.$$ This allows us to write the bin vectors $\mathbf{U}_c[\mathbf{j}]$ as shifted versions of the corresponding Fourier coefficients found in the bin. For example, we can rewrite the multi-ton $\mathbf{U}_1[01]$ as: $$\mathbf{U}_1[01] = F[0100]\begin{bmatrix}
    1 \\ 
    \omega_2^{(1)(1)} = -1 \\
    \omega_2^{(1)(0)} = 1 \\
    \omega_2^{(1)(0)} = 1 \\
    \omega_2^{(1)(0)} = 1 
\end{bmatrix} + F[0112] \begin{bmatrix}
    1 \\ \omega_2^{(1)(0)} = 1 \\ \omega_2^{(1)(1)} = -1 \\ \omega_2^{(1)(1)} = -1 \\ \omega_3^{(1)(2)} = e^{4j\pi/3}
\end{bmatrix}.$$ 
 We can also write the singleton bin vector $\mathbf{U}_2[12]$ as: $$\mathbf{U}_2[\mathbf{j}] = F[0112] \begin{bmatrix}
    1 \\ \omega_2^{(1)(0)} = 1 \\ \omega_2^{(1)(1)} = -1 \\ \omega_2^{(1)(1)} = -1 \\ \omega_3^{(1)(2)} = e^{4j\pi/3}
\end{bmatrix}.$$ Using the noiseless bin detection outlined in Section \ref{bindetection}, we find the index $\mathbf{k}$ as: $$\mathbf{D}_c\mathbf{k} = \mathbf{k} = \begin{bmatrix}
    \text{arg}_2(1) = 0 \\ \text{arg}_2(-1) = 1\\ \text{arg}_2(-1) = 1\\ \text{arg}_3(e^{4j\pi/3}) = 2
\end{bmatrix}.$$
Repeating this process for all singletons will allow us to begin the peeling operation as detailed in Section \ref{peelingdecoder}.
\subsection{Proof of Equation (\ref{eq:sitedep_alias_fourier_p})}
\label{aliasing-proof-section}
\begin{proof}
    Let us first examine the case with no delay:
    \begin{equation}
        \label{initial-aliasing-proof}
        U_{c,p}[\mathbf{j}] = \frac{1}{B_c} \sum_{\boldsymbol{\ell} \in \mathbb{Z}_{\mathbf{b}_c}} f(\mathbf{M}_c\boldsymbol{\ell}) \prod_{m=1}^b \omega_{\left(\mathbf{b}_c\right)_i}^{-j_m \ell_m}.
    \end{equation}
    Using Equations (\ref{eq:inv_sitedep_fourier}) and (\ref{initial-aliasing-proof}), this is equivalent to: 
    \begin{equation*}
        \label{expansion-aliasproof}
         U_{c,p}[\mathbf{j}] = \frac{1}{B_c} \sum_{\boldsymbol{\ell} \in \mathbb{Z}_{\mathbf{b}_c}} \left(\sum_{\mathbf{k} \in \mathbb{Z}_{\mathbf{q}}} F[\mathbf{k}] \prod_{i=1}^n \omega_{q_i}^{k_i (\mathbf{M}_c\boldsymbol{\ell})_i}\right) \prod_{m=1}^b \omega_{\left(\mathbf{b}_c\right)_i}^{-j_m \ell_m}.
    \end{equation*}
    Using the structure of $\mathbf{M}_c$ from Equation (\ref{thiscasematrices}), we know that $\mathbf{M}_c$ will freeze an $n-b$ bit segment of $\mathbf{M}_c\boldsymbol{\ell}$ to 0, which lets us rewrite as: 
    \begin{align*}
            U_{c,p}[\mathbf{j}] &= \frac{1}{B_c}  \sum_{\mathbf{k} \in \mathbb{Z}_{\mathbf{q}}} F[\mathbf{k}] \sum_{\boldsymbol{\ell} \in \mathbb{Z}_{\mathbf{b}_c}} \prod_{m=1}^b \omega_{\left(\mathbf{b}_c\right)_i}^{\ell_m((\mathbf{M}_c^T\mathbf{k})_m -j_m)} \\  &=  \sum_{\mathbf{k}:  \mathbf{M}_c^T\mathbf{k} = \mathbf{j}} F[\mathbf{k}],
    \end{align*}
    where the last step uses Lemma \ref{lemma: zero-vector-innerprod}. We can then use the shifting property of the Fourier transform to get the pattern with the delay: \begin{equation}
    U_{c, p}[\mathbf{j}] = \sum_{\mathbf{k}: \text{ }\mathbf{M}_c^T \mathbf{k}= \mathbf{j}} F[\mathbf{k}] \prod_{i = 1}^{n} \omega_{q_i}^{(\mathbf{d}_{c,p})_i k_{i}}.
    \end{equation} 
\end{proof}
\begin{lemma}
    \label{lemma: zero-vector-innerprod}
    For some vector $\mathbf{a} \in \mathbb{Z}_{\mathbf{q}} = \mathbb{Z}_{q_1} \times \mathbb{Z}_{q_2} \times \cdots \mathbb{Z}_{q_n} $, 
    \begin{equation}
        \label{eq:zero-vector-alias-lemma}
        \sum_{\boldsymbol{\ell} \in \mathbb{Z}_{\mathbf{q}}}\prod_{i=1}^n \omega_{q_i}^{\ell_i a_i} = 0 \Leftrightarrow \mathbf{a} \neq \mathbf{0}_n.
    \end{equation}
\end{lemma}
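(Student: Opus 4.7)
The plan is to exploit the tensor-product structure of $\mathbb{Z}_{\mathbf{v}}$ so that the multidimensional character sum factors into $n$ independent one-dimensional sums, each of which is a standard geometric series in a root of unity. Because the summand is a pure product of factors, each depending only on one coordinate of $\boldsymbol{\ell}$, and because the summation index $\boldsymbol{\ell}$ ranges independently over the product set, I would first write
\begin{equation*}
\sum_{\boldsymbol{\ell} \in \mathbb{Z}_{\mathbf{v}}}\prod_{i=1}^n \omega_{v_i}^{\ell_i a_i}
\;=\;
\prod_{i=1}^n \sum_{\ell_i = 0}^{v_i - 1} \omega_{v_i}^{\ell_i a_i}.
\end{equation*}
This reduces the problem to evaluating the $n$ univariate sums separately and then analyzing when their product vanishes.

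For each factor I would split into two cases. If $a_i = 0$ in $\mathbb{Z}_{v_i}$, every summand is $1$ and the inner sum equals $v_i$. If $a_i \neq 0$, then $\omega_{v_i}^{a_i}$ is a nontrivial $v_i$-th root of unity, so the closed form for a finite geometric series gives
\begin{equation*}
\sum_{\ell_i = 0}^{v_i - 1} \omega_{v_i}^{\ell_i a_i}
\;=\;
\frac{1 - \omega_{v_i}^{v_i a_i}}{1 - \omega_{v_i}^{a_i}}
\;=\;
0,
\end{equation*}
since $\omega_{v_i}^{v_i} = 1$ and the denominator is nonzero. Thus each factor is either $v_i$ or $0$ according as $a_i$ is zero or nonzero in $\mathbb{Z}_{v_i}$.

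To finish, I would combine the two cases into the claimed equivalence. The product $\prod_i$ of the factors is zero if and only if at least one factor vanishes, which happens precisely when at least one coordinate $a_i$ is nonzero, i.e.\ when $\mathbf{a} \neq \mathbf{0}$. Conversely, when $\mathbf{a} = \mathbf{0}$ every factor equals $v_i$, so the product equals $\prod_{i=1}^n v_i = N \neq 0$, ruling out the other direction. I do not anticipate any real obstacle here; the only step that requires any care is justifying the factorization, but that is immediate from distributing the product of $n$ factors over the independent coordinate-wise sums, so the proof is essentially a direct computation.
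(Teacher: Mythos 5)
Your proof is correct and follows essentially the same route as the paper: factor the multidimensional sum into a product of univariate sums and observe that each factor is $v_i$ or $0$ according as $a_i$ is zero or not. If anything, your version is slightly more careful than the paper's, since the geometric-series evaluation $\bigl(1 - \omega_{v_i}^{v_i a_i}\bigr)/\bigl(1 - \omega_{v_i}^{a_i}\bigr) = 0$ justifies the vanishing for every nonzero $a_i$ (including when $\gcd(a_i, v_i) > 1$, where the terms are not \emph{all} the $v_i$-th roots of unity), and you explicitly handle the mixed case in which only some coordinates of $\mathbf{a}$ are nonzero.
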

\begin{proof}
    If we let $\mathbf{a} = \mathbf{0}_n$, then the product term becomes: $$\prod_{i=1}^n \omega_{q_i}^{\ell_i(0)}.$$ This product reduces to 1, which simplifies Equation (\ref{eq:zero-vector-alias-lemma}) to: $$\sum_{\boldsymbol{\ell} \in \mathbb{Z}_{\mathbf{q}}} 1 = q_1q_2\cdots q_n.$$
    For $\mathbf{a} \neq \mathbf{0}_n$, we can then rewrite Equation (\ref{eq:zero-vector-alias-lemma}) as: $$\prod_{i=1}^n \sum_{\ell_j \in \mathbb{Z}_{q_{i}}} \omega_{q_i}^{\ell_j a_i}.$$ 
    The summation term is a sum over all of the roots of unity of $\omega_{q_i}$, which comes out to 0. This makes the entire expression equal 0.
 \end{proof}

\subsection{Proof of Theorem \ref{noiseless}}
\label{noiseless-full-proof}
Our proof follows a similar structure to those seen in other papers that analyze low-density parity check (LDPC) codes. First, we find the general distribution of check nodes in the graph and show that the expected number of Fourier coefficients not recovered converges to 0 given sufficient samples (Lemmas \ref{lemma: poisson-distribution-lemma} and \ref{lemmma: bound conditional mean}). Then, we use a martingale argument to show that the number of Fourier coefficients not recovered converges to its mean (Lemmas \ref{azuma-finite-difference}, \ref{azuma final}, and \ref{tree-neighborhood}). Lastly, we show that peeling is always possible by showing that any subset of the variable nodes makes an expander graph (Lemma \ref{expander probability}). We use the matrix constructions for the case of $0 \leq \delta \leq 1/3,$ but similar constructions can be found for the less sparse regime in Section B.3 of \cite{li2015spright}. We assume that the subsampling matrices $\mathbf{M}_c$ are constructed as 
\begin{equation}
\label{thiscasematrices}
    \mathbf{M}_c = [\mathbf{0}_{b \times b(c-1)}, \mathbf{I}_{b \times b}, \mathbf{0}_{b \times (n - cb)}]^T.
\end{equation}

\begin{lemma}
    \label{lemma: poisson-distribution-lemma}
    Let $c$ be an arbitrary subsampling group where the alphabet subset is defined as seen in Equation (\ref{alphabet-subset}). Then, the fraction of edges connected to check nodes in $c$ with degree $j$ can be written as:
    \begin{equation}
        \label{degree-distribution}
        \rho_{c, j} = \frac{(1/\eta_c)^{j-1} e^{-1/\eta_c}}{(j-1)!} \quad j = 0, \cdots, S
    \end{equation}
\end{lemma}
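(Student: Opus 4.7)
The plan is to adapt the standard LDPC-style degree distribution argument to the hashing induced by the partial-identity matrices $\mathbf{M}_c$. The key observation is that, under Assumption~\ref{sublinear-sparsity}, the hash $\mathbf{M}_c^T \mathbf{k}$ of a uniformly random index $\mathbf{k} \in \mathbb{Z}_{\mathbf{q}}$ is itself uniform on $\mathbb{Z}_{\mathbf{b}_c}$, so the bipartite graph in subsampling group $c$ is exactly a balls-and-bins hashing of the $S$ Fourier support atoms into $B_c = \prod_i (\mathbf{b}_c)_i$ bins.

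First, I would unpack the hashing. By Equation~(\ref{thiscasematrices}), the matrix $\mathbf{M}_c^T$ merely extracts coordinates $b(c-1)+1$ through $bc$ of $\mathbf{k}$. Because these coordinates live in independent factors of the product space $\mathbb{Z}_{\mathbf{q}}$, and $\mathbf{k}$ is uniform there by Assumption~\ref{sublinear-sparsity}, the image $\mathbf{M}_c^T \mathbf{k}$ is uniform on $\mathbb{Z}_{\mathbf{b}_c}$. Combined with independence across the $S$ distinct elements of $\mathcal{S}$, this means each of the $S$ variable nodes is placed into one of the $B_c$ check bins uniformly and independently at random. I would then compute the check-node-perspective degree distribution: the degree of any fixed check node $\mathbf{U}_c[\mathbf{j}]$ is $\mathrm{Binomial}(S, 1/B_c)$. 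Writing $B_c = \eta_c S$ for a constant $\eta_c$ (consistent with $B_c = O(S)$ in Theorem~\ref{noiseless} and the use of $\eta_c$ in Proposition~\ref{proposition1}), the mean degree is $1/\eta_c$, and as $S \to \infty$ the distribution converges to $\mathrm{Poisson}(1/\eta_c)$, yielding
\begin{equation*}
\Pr\!\big(\deg(\mathbf{U}_c[\mathbf{j}]) = j\big) \;\longrightarrow\; \frac{(1/\eta_c)^{j} e^{-1/\eta_c}}{j!}.
\end{equation*}

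Next, I would translate this node-perspective distribution into the edge-perspective $\rho_{c,j}$ via the standard identity $\rho_{c,j} = j\cdot \Pr(\deg = j)/\mathbb{E}[\deg]$. Dividing by $\mathbb{E}[\deg] = 1/\eta_c$ and canceling the factor $j$ in the numerator against the $j!$ in the denominator produces exactly
\begin{equation*}
\rho_{c,j} \;=\; \frac{(1/\eta_c)^{j-1}\,e^{-1/\eta_c}}{(j-1)!},
\end{equation*}
which is Equation~(\ref{degree-distribution}). Note the formula is automatically consistent at $j = 0$, since zero-degree check nodes contribute no edges.

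The main obstacle is not the conceptual derivation but the asymptotic justification that this limiting distribution is actually what subsequent steps of the proof (density evolution and the expander argument on the peeling graph) may invoke. At finite $S$ the Binomial-to-Poisson approximation is only approximate, and density-evolution-style reasoning requires the local neighborhood of a typical check node to be tree-like with high probability. I expect this to be handled by combining sub-linear sparsity $S = N^\delta$ (Assumption~\ref{sublinear-sparsity}), which sends $S \to \infty$, with standard Poisson-approximation bounds (e.g., Le Cam or coupling) and a union bound showing that short cycles are unlikely in a local neighborhood of growing radius. These bookkeeping steps, rather than the clean combinatorial identity above, constitute the bulk of the rigorous work.
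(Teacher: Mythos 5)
Your proposal is correct and follows essentially the same route as the paper's proof: uniform hashing of the $S$ support elements into $B_c = \eta_c S$ bins via the partial-identity structure of $\mathbf{M}_c$, a $\mathrm{Binomial}(S, 1/B_c)$ check-node degree approximated by $\mathrm{Poisson}(1/\eta_c)$, and the standard node-to-edge-perspective conversion $\rho_{c,j} = j\Pr(\deg=j)/\mathbb{E}[\deg]$, which matches the paper's $\rho_{c,j} = \Pr(\deg=j)\,B_c j/S$. Your closing remark about deferring the rigorous tree-like-neighborhood and concentration bookkeeping is also consistent with the paper, which handles those in subsequent lemmas.
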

\begin{proof}
    Let $\mathcal{G}(S, \{\eta_{c}\}_{c \in [C]}, C, \{{\mathbf{M}}_{c}\}_{c \in [C]})$ represent the set of all bipartite graphs from subsampling matrices $\mathbf{M}_c$. We again assume that the subsampling matrices are constructed according to Equation (\ref{thiscasematrices}).
    This lets us assume that for a certain $c$ where $\mathbf{j} = \mathbf{M}_c\mathbf{k}$, each $\mathbf{j}_c$ is chosen uniformly and randomly from $\mathbb{Z}_{\mathbf{b}_c}$.
    For each group $c$, let $\eta_c$ be a redundancy parameter such that $B_c = S\eta_c$.  The proof for this closely follows Appendix B.1 of~\cite{li2015spright}.
    except we analyze the degree distribution of the bipartite graph for each group $c$, rather than the entire graph, because the bipartite graph is biased by the varied alphabet size and size of $B_c$ for each group. 
    For a single group $c$, with total number of hashed coefficients $B_c$, we have $S$ edges. This means that for a certain group $c$, the fraction of edges connecting to check nodes with degree $j$ is:
    \begin{align*}
        \rho_{c, j} = \frac{\text{Pr(check node has degree }j)B_cj}{S} \\ = \text{Pr(check node has degree }j)\eta_cj.
    \end{align*}
     Since the probability of some arbitrary edge $e$ being connected to a check node $\mathbf{j}$ is $1/B_c$, the probability that $\mathbf{j}$ has degree $j$ can be found with a binomial distribution: $Binom(\frac{1}{\eta_c}S, S)$. We can approximate this by a Poisson distribution: $$\text{Pr(check node has degree }j) = \frac{(1/\eta_c)^j e^{-1/\eta_c}}{j!}.$$ This gives us the result seen in Equation (\ref{degree-distribution}): $$\rho_{c, j} = \frac{(1/\eta_c)^{j-1} e^{-1/\eta_c}}{(j-1)!}$$ 
\end{proof}
\begin{figure}[h!]  
    \vspace*{0in}   
    \hspace*{1.8in}  
    \scalebox{0.9}{ 
        \begin{tikzpicture}[
    box/.style={draw, minimum size=0.7cm},
    node/.style={draw, circle, minimum size=0.8cm, fill=white},
    scale=1.2, transform shape
]
    \node[box, fill=violet!40] (v) at (0,2.5) {$c$};
    
    \node[node, fill=blue!10] (c) at (0,0) {$v$};
    
    \node[box, fill=blue!40] (b1) at (-2,-2) {};
    \node[box, fill=yellow!20] (b2) at (0,-2) {$c'$};
    \node[box, fill=green!20] (b3) at (2,-2) {};
  
    \node[node, fill=blue!10] (c1a) at (-3,-4) {};
    \node[node, fill=blue!10] (c1b) at (-1,-4) {};
   
    \node[node, fill=blue!10] (c2) at (0,-4) {};

    \node[node, fill=blue!10] (c3a) at (1,-4) {};
    \node[node, fill=blue!10] (c3b) at (2,-4) {};
    \node[node, fill=blue!10] (c3c) at (3,-4) {};

    \node[right] at (3,-2) {$C-1=3$ check nodes};

    \node[right] at (3.5,-4) {};
  
    \draw (v) -- node[right] {} (c);

    \draw (c) -- (b1);
    \draw (c) -- (b2);
    \draw (c) -- (b3);
  
    \draw (b1) -- (c1a);  
    \draw[dotted, line width=1pt, dash pattern=on 2pt off 2pt] (b1) -- (c1b);  
    \draw[dotted, line width=1pt, dash pattern=on 2pt off 2pt] (b2) -- (c2);  
    \draw (b3) -- (c3a); 
    \draw (b3) -- (c3b);  
    \draw[dotted, line width=1pt, dash pattern=on 2pt off 2pt] (b3) -- (c3c);  
\end{tikzpicture}
    }
    \caption{Directed neighborhood $\mathcal{N}_e^{2i}$ of depth 2 for an edge $e$ over a particular graph $\mathcal{G}(S, \{\eta_{c}\}_{c \in [C]}, C, \{{\mathbf{M}}_{c}\}_{c \in [C]})$ with $C=4$ subsampling groups. Dotted lines represent edges peeled on the last iteration. $c$ could be peeled from the graph on the next iteration because check node $c'$ now has degree 1.}
    \label{fig:neighborhood}
\end{figure}
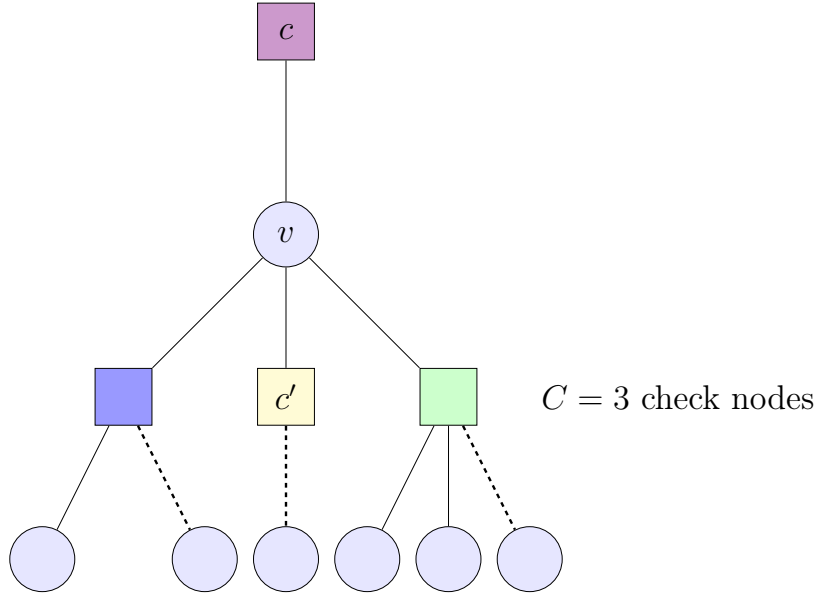

For the next step, we will analyze the density evolution of the peeling decoder~\cite{guruswami2006ldpc}.
Let $\mathcal{N}_{e}^{\ell}$ be the \textit{directed neighborhood} of edge $e$ with depth $\ell$, which can be written as $e_1, e_2, \cdots e_{\ell}, e \neq e_1$.


Given the local cycle-free neighborhood $\mathcal{N}_{e}^{2i}$ of an edge $e = (v, c')$, we can find the probability $p_{c,i}$ that $e$ is present after $i$ iterations as: \begin{equation}
    \label{eq:coupon}
    p_{c', i} = \prod_{c \in [C] \setminus \{c'\}} \left(1 - \sum_j \rho_{c, j} (1-p_{c, i-1})^{j-1} \right),
\end{equation} 
where $[C] = \{1,\ldots,C\} $. This can further be approximated as: \begin{equation}
    \label{eq:coupon_approx}
    p_{c', i} = \prod_{c \in [C] \setminus \{c'\}} \left(1 - e^{\frac{1}{\eta_c}p_{c, i-1}} \right),
\end{equation} 
due to the Poisson distribution of the check nodes~\cite{shokrollahi2004ldpc}. It is possible to choose $\{\eta_c\}_{c \in [C]}$ such that $p_{c, i}$ converges to 0 for all $c$. Table 1 of~\cite{li2015spright} gives the values of $\eta$ needed for convergence for a certain value of $C$. In our case, it is possible to find similar values of $\eta_c$ that cause convergence.

Then, let $\mathcal{T}_i$ be the event that for every edge $e$ the neighborhood $\mathcal{N}_{e}^{2i}$ is a tree, and let $Z_i$ be the number of edges that are still not peeled after iteration $i$, such that $Z_i = \sum_{c = 1}^C \sum_e Z_i^{e_{c'}}$, where $Z_{i}^{e_{c'}} = \mathbbm{1}\{\text{Edge }e\text{ in group } c' \text{ is not peeled}\}$. Then, the expected number of edges not peeled can be written as:
\begin{equation}
    \label{eq:peelingexpectation}
    \mathbbm{E}[Z_i | \mathcal{T}_i] = S\sum_{c=1}^C p_{c, i}.
\end{equation}
This shows that, given that the neighborhoods are all trees, we can terminate the peeling operation for an arbitrarily small number of edges left: $Z_i < \epsilon S\sum_{c=1}^Cp_{c,i} $. We show in Lemma \ref{tree-neighborhood} that the probability all neighborhoods $\mathcal{N}_e^{2i}$ are trees increases as $S$ increases. We can show that the difference between the expectation and the expectation conditioned on the tree assumption can be made arbitrarily small: 
\begin{lemma} (Adapted from Lemma 10 of~\cite{pedarsani2017phasecode})
\label{lemmma: bound conditional mean}
    For some $\epsilon > 0$ and sufficiently large $S$, there exists a peeling iteration $i$ that satisfies $$|\mathbb{E}[Z_i] - \mathbb{E}[Z_i|\mathcal{T}_i]| \leq SC\epsilon/2.$$
\end{lemma}
\begin{proof}
    We denote $Z_i^c$ as $SZ_i^{e_c},$ or the number of edges connected to check nodes in group $c$ on iteration $i.$ Due to linearity of expectation, we can write $\mathbb{E}[Z_i^c] = S\mathbb{E}[Z_i^{e_c}].$ 
   
    We can then use total expectation to rewrite $\mathbb{E}[Z_i^{e_c}]$: \begin{equation}
        \mathbb{E}[Z_i^{e_c}] = \mathbb{E}[Z_i^{e_c} | \mathcal{T}_i]\text{Pr}(\mathcal{T}_i) + \mathbb{E}[Z_i^{e_c} | \mathcal{T}_i^c]\text{Pr}(\mathcal{T}_i^c). 
    \end{equation}
    Trivially, we know that $\mathbb{E}[Z_i^{e_c} | \mathcal{T}_i^c] \leq 1$. We also know that $\mathbb{E}[Z_i^e | \mathcal{T}_i] = p_{c,i}$. We can bound $\mathbb{E}[Z_i^c]$ as $$\mathbb{E}[Z_i^{e_c}|\mathcal{T}_i]\text{Pr}(\mathcal{T}_i) \leq \mathbb{E}[Z_i^{e_c}] \leq \mathbb{E}[Z_i^{e_c} | \mathcal{T}_i] + \text{Pr}(\mathcal{T}_i^c).$$
    From Lemma \ref{tree-neighborhood}, we know that $\text{Pr}(\mathcal{T}_i) = 1 - O(\log^i(S)/S)$, which we can rewrite as $1 - \gamma \frac{\log^i(S)}{S}$ for some constant $\gamma$. With a little algebra, we can get: $$-\gamma\log^i(S) \leq \mathbb{E}[Z_i^c] - \mathbb{E}[Z_i^c | \mathcal{T}_i] \leq \gamma\log^i(S),$$ where $\mathbb{E}[Z_i^c | \mathcal{T}_i] = Sp_{c, i},$ which simplifies to $|\mathbb{E}[Z_i^c] - \mathbb{E}[Z_i^c | \mathcal{T}_i]| \leq S\epsilon/2$ if $\frac{S}{\log^i (S)} \geq 2\gamma/\epsilon.$ Due to the linearity of expectation, we also know that $$\sum_{c = 1}^C (\mathbb{E}[Z_i^c] - \mathbb{E}[Z_i^c | \mathcal{T}_i]) = \mathbb{E}[Z_i] - \mathbb{E}[Z_i | \mathcal{T}_i],$$ which gives us the result: $$|\mathbb{E}[Z_i] - \mathbb{E}[Z_i | \mathcal{T}_i]| \leq SC\epsilon/2.$$ 
\end{proof}
Now that we have shown that we can make the mean arbitrarily small, we would like to show that the number of unpeeled edges will converge to its mean value. To show convergence around the mean, we will set up a martingale $Y_0, \cdots Y_{SC}$, where $Y_{\ell} = \mathbb{E}[Z_i | Z_1^i, Z_2^i, \cdots Z_{\ell}^i]$ such that $Y_0 = \mathbb{E}[Z_i]$ and $Y_{SC} = \mathbb{E}[Z_i | Z_1^i, Z_2^i \cdots Z_{SC}^i] = Z_i$. 
This is a Doob's martingale process, which means we can apply Azuma's inequality: \begin{equation}
    \label{azumadefinition}
    \text{Pr}(|Z_i - \mathbb{E}[Z_i]| > t ) \leq 2\exp\left(-\frac{t^2}{2\sum_{i=1}^{SC}\alpha_i^2}\right).
\end{equation} We must first find the finite difference $|Y_i - Y_{i-1}| = \alpha_i$. It was shown in Theorem 2 of~\cite{richardson2001capacity} that, for a Tanner graph with regular variable degree $d_V$ and regular check degree $d_C$, $\alpha_i$ would be upper bounded by $8(d_Vd_C)^i$. However, as seen in Equation (\ref{degree-distribution}), the check node degree distribution of $\mathcal{G}(S, \{\eta_{c}\}_{c \in [C]}, C, \{{\mathbf{M}}_{c}\}_{c \in [C]})$ is not uniform, so we must adjust for the irregular degrees of the check node.

\begin{lemma}
    \label{azuma-finite-difference}
    (Adapted from~\cite{li2015spright}) Using the definition of Azuma's inequality given in Equation (\ref{azumadefinition}), we can bound the finite martingale difference as $\alpha_i^2 = O(S^{\frac{4i}{4i+1}})$.
\end{lemma} 

\begin{proof}
    We will first find the probability of $d_C \geq O(S^{\frac{2}{4i+1}})$. Then, for some constants $d_1, d_2, \beta$, we can bound this probability first using a Chernoff bound for a Poisson random variable $X$ with mean $\lambda$: 
\begin{equation}
    \label{eq:chernoff-part1}
    \text{Pr}(X > d_1 S^{\frac{2}{4i+1}}) \leq \left(\frac{e \lambda}{d_1 S^{\frac{2}{4i+1}}}\right)^{d_1 S^{\frac{2}{4i+1}}}.
\end{equation}
Then, we can bound the right-hand side as:
\begin{equation}
    \label{eq: chernoff-part2}
    \left(\frac{e \lambda}{d_1 S^{\frac{2}{4i+1}}}\right)^{d_1 S^{\frac{2}{4i+1}}} \leq
    d_2 \exp(-\beta S^{\frac{2}{4i+1}}).
\end{equation}
Note that this applies for all groups $c$ since every group has Poisson distributed check node degrees.
Now, let $\mathcal{B}$ be the event that at least one check node has a degree more than $O(S^{\frac{2}{4i+1}})$. Given the result of Equation (\ref{eq: chernoff-part2}) and using a union bound across all the nodes in $B_c = \eta_c S$, we can write: $$\text{Pr}(\mathcal{B}) < d_3S\exp(-\beta S^{\frac{2}{4i+1}}).$$
Let $\mathcal{B}^c$ be the event that no check node has a degree more than $O(S^{\frac{1}{4i+1}})$. If $\mathcal{B}$ has not happened, then we can bound $d_C$ by, $$d_C \leq O(S^{\frac{2}{4i+1}}),$$ and get a finite difference, $$\alpha_i^2 = (8(d_Vd_C)^i)^2 = O(S^{\frac{4i}{4i+1}}).$$
\end{proof} With this result we can look closer at the concentration around the mean:
\begin{lemma} (Adapted from~\cite{li2015spright})
\label{azuma final}
    For some constant $\beta$, $\text{Pr}(|Z_i - \mathbb{E}[Z_i]| > SC\epsilon/2 ) \leq 2 \exp(-\beta \epsilon^2 S^{\frac{1}{4i+1}}).$
\end{lemma}
\begin{proof}
    Using the law of total probability, we can rewrite the probability as:
    $$
        \text{Pr}(|Z_i - \mathbb{E}[Z_i]| > CS\epsilon/2 ) = \text{Pr}(|Z_i - \mathbb{E}[Z_i]| > CS\epsilon/2 \ | \ \mathcal{B})\text{Pr}(\mathcal{B}) + \text{Pr}(|Z_i - \mathbb{E}[Z_i]| > CS\epsilon/2 \ | \ \mathcal{B}^c)\text{Pr}(\mathcal{B}^c),$$ which can once again be bounded with, \begin{equation}
            \label{total-prob-inequality-finite}
            \text{Pr}(|Z_i - \mathbb{E}[Z_i]| > CS\epsilon/2 )\leq \text{Pr}(\mathcal{B}) + \text{Pr}(|Z_i - \mathbb{E}[Z_i]| > CS\epsilon/2 \ | \ \mathcal{B}^c).
        \end{equation}
    Now, we can use the result from Lemma \ref{azuma-finite-difference} to bound the right-hand side of Equation (\ref{total-prob-inequality-finite}) as:
    \begin{equation}
        \label{eq:azuma-substitution}
        \text{Pr}(|Z_i - \mathbb{E}[Z_i]| > CS\epsilon/2 \ | \ \mathcal{B}^c) + \text{Pr}(\mathcal{B}) \leq 2\exp\left(\frac{-C^2 S^2 \epsilon^2}{8 \sum_{i=1}^{SC} \alpha_i^2}\right) + d_3 S\exp(-\beta S^{\frac{2}{4i+1}}).
    \end{equation} 
    Since the slowest growing exponential in Equation (\ref{eq:azuma-substitution}) is $S^{\frac{2}{4i+1}}$, we can again lower bound the right-hand side to get the result: $$\text{Pr}(|Z_i - \mathbb{E}[Z_i]| > CS\epsilon/2 ) \leq 2 \exp(-\beta \epsilon^2 S^{\frac{1}{4i+1}}).$$
\end{proof}

\begin{lemma} (Adapted from~\cite{pedarsani2017phasecode})
    \label{tree-neighborhood}
    For a sufficiently large $S$, all neighborhoods  $\mathcal{N}_{e}^{2i}$ are trees.
\end{lemma}

\begin{proof}
    To prove that $\mathcal{N}_{e}^{2i}$ is a tree for all $i$, we start with a fixed $\mathcal{N}_e^{2i}$ that is a tree approaching $i^* > i$. Let $C_{i}$ be the number of check nodes and $V_i$ be the number of variable nodes in the tree at this point. We will start by showing that, with a high probability, the size of the tree neighborhood grows slower than $O(\log^i(S))$.
    Using the law of total probability, we can write:
    \begin{equation}
        \label{tree-probability-expansion}
        \text{Pr}(\mathcal{T}_i^c) \leq \text{Pr}(V_i > \kappa_1 \log^i(S)) + \text{Pr}(C_i > \kappa_2 \log^i(S)) + \text{Pr}(\mathcal{T}_i^c \ | \ Vi < \kappa_1 \log^i(S), C_i > \kappa_2 \log^i(S))).
    \end{equation}
    If we denote $\alpha_i = \text{Pr}(V_i > \kappa_1 \log^i(S))$, then we bound $\alpha_i$ as: 
    \begin{align}
    \alpha_i &\leq \alpha_{i-1} + \text{Pr}(V_i > \kappa_1 \log^i(S) \mid V_{i-1} < \kappa_1 \log^i(S)) \notag \\
             &\leq \alpha_{i-1} + \text{Pr}(V_i > \kappa_1 \log^i(S) \mid C_{i-1} < \kappa_2 \log^i(S)),
    \label{tree-alpha-expansion}
    \end{align}
    where the last inequality comes from the fact that for each variable node added to the tree, a maximum of $C-1$ check nodes are also added to the tree. To bound the last term in Equation (\ref{tree-alpha-expansion}), we would need to analyze the degree distribution of each check node. We will examine the number of check nodes at exactly depth $i$ in the graph $M_i$, where $M_i < C_i$. Let $D_i$ be the degree of each check node. Finding the probability that $V_i > \kappa_1 \log^i(S)$ is equivalent to finding the probability that $\sum_j D_j > \kappa_3 \log^i(S) $ since each edge of each $M_i$ check nodes is connected to a variable node in $V_i$. Therefore, we can bound the right-hand side of Equation (\ref{tree-alpha-expansion}) as: 
    \begin{equation}
        \label{poisson-tree-expansion}
        \text{Pr}(V_i > \kappa_1 \log^i(S) \ | \ C_{i-1} < \kappa_2 \log^i(S)) \leq \text{Pr}\left(\sum_{j=i}^{M_i}D_j \geq \kappa_3 \log^i(S)   \right).
    \end{equation}

    We know that check node degrees are Poisson distributed, but each group $c$ has a different mean $1/\eta_c$. For an accurate bound, we will assume all check node degrees are Poisson distributed with the highest mean from $\{\frac{1}{\eta_c}\}_{c \in [C]}$.  Let $D_{j, c_{max}}$ denote the degree of a check node from group $c_{\text{max}} \coloneqq \argmax_c (1/\eta_c)$. Then, we can write a final bound as:
    \begin{equation}
        \label{max-poisson-bound}
        \text{Pr}\left(\sum_{j=i}^{M_i}D_j \geq \kappa_3 \log^i(S)   \right) \leq \text{Pr}\left(\sum_{j=i}^{M_i}D_{j, c_{\text{max}}} \geq \kappa_3 \log^i(S)   \right) \leq \left(\frac{eM_i/\eta_{c_{\text{max}}}}{\kappa_3 \log^i (S)} \right)^{\kappa_3 \log^i (S)} \leq O(1/S),
    \end{equation}
    meaning that $\alpha_i \leq \alpha_{i-1} + O(1/S)$. Since we can bound the number of variable nodes in the graph with this high probability, we know that the same bound applies to the check nodes.
    Now that we have shown that the tree grows no larger than $O(\log^i(S))$ with probability $1/S$, we can move on to the probability of creating a tree. The rest of the proof follows closely from~\cite{richardson2001capacity}, other than the assumption of uniform check node degree.
    
    Let $C_{i, c}$ be the number of check nodes from group $c$ present in the graph. Let $t$ be the number of new edges to check nodes in $c$ from the variable nodes in $\mathcal{N}_e^{2i + 1}$ that do not create a loop. Then, the probability that the next revealed edge does not create a loop is: $$\frac{B_c - C_{i, c} - t}{B_c} \geq 1 - \frac{C_{i^*, c}}{B_c},$$ where $c$ is the group that the next edge connects to, and assuming that $i^*$ is sufficiently large. If $\mathcal{N}_e^{2i}$ is a tree, then the probability that $\mathcal{N}_e^{2i+1}$ is a tree depends on whether all the new edges from the variable nodes in $\mathcal{N}_e^{2i}$ connect to check nodes that were not already in the tree. We can lower bound the probability as the following: $$\text{Pr}(\mathcal{N}_e^{2i + 1} \text{ is tree-like }| \ \mathcal{N}_e^{2i} \text{ is tree-like }) \geq \min_c \left(1 - \frac{C_{i^*, c}}{B_c}\right)^{C_{i+1, c} - C_{i, c}}.$$ If we then assume that $\mathcal{N}_e^{2i+1}$ is a tree, the probability of whether $\mathcal{N}_e^{2i + 2}$ is a tree can be found similarly. Assuming that $k$ edges have been revealed going from check nodes to variable nodes at depth $2i +2$ without creating a loop, the probability of the next edge creating a loop is: $$\frac{(S - V_i - k)C}{SC - V_i - k} \geq \left(1 - \frac{V_i^*}{S}\right). $$ This similarly gives a lower bound on the probability: $$\text{Pr}(\mathcal{N}_e^{2i + 2} \text{ is tree-like }| \ \mathcal{N}_e^{2i + 1} \text{ is tree-like }) \geq  \left(1 - \frac{V_{i^*}}{S}\right)^{V_{i+1} - V_{i}}.$$ This means that the probability that $\mathcal{N}_e^{2i^*}$ is a tree is lower bounded by: $$\min_c \left(1 - \frac{C_{i^*, c}}{B_c}\right)^{C_{i^*, c}} \left(1 - \frac{V_{i^*}}{S}\right)^{V_{i^*}}.$$
    For sufficiently large $k$ and fixed $i^*$, we can upper bound the probability that $\mathcal{N}_e^{2i^*}$ is not a tree as:
    \begin{equation}
        \label{tree-square-bound}
        \text{Pr}(\mathcal{N}_e^{2i^*} \text{ is not tree-like}) \leq \max_{c} \frac{M_{i^*}^2 + \frac{1}{\eta_c}C_{i^*, c}^2}{S}.
    \end{equation}
    We previously showed that check node size at depth $i$ is upper bounded by $O(\log^i(S))$ with a high probability, which means that we can bound Equation (\ref{tree-square-bound}) again as: \begin{equation}
        \text{Pr}(\mathcal{N}_e^{2i^*} \text{ is not tree-like}) \leq O(\log^{i^*}(S)/S).
    \end{equation}
\end{proof}

\begin{definition}  (Adapted from~\cite{li2015spright})
    Consider a subset $\mathcal{S}$ of $\mathcal{G}(S, \{\eta_{c}\}_{c \in [C]}, C, \{{\mathbf{M}}_{c}\}_{c \in [C]})$ with $|\mathcal{S}| = k < \epsilon S$. If, for a neighborhood of check nodes in group $c$ $\mathcal{N}_c(\mathcal{S})$, the condition is met that $|\mathcal{N}_c(\mathcal{S})| > |\mathcal{S}|/2$, then $\mathcal{G}(S, \{\eta_{c}\}_{c \in [C]}, C, \{{\mathbf{M}}_{c}\}_{c \in [C]})$ is an $(\epsilon, 1/2, C)$-expander.
\end{definition}
\begin{lemma}
\label{expander probability}
    The probability that $\mathcal{G}(S, \{\eta_{c}\}_{c \in [C]}, C, \{{\mathbf{M}}_{c}\}_{c \in [C]})$ is an expander graph is at least $1 - O(1/S)$. 
\end{lemma}
\begin{proof}
    Given a subset of the variable nodes $\mathcal{S}$ where $|\mathcal{S}| = k < \epsilon S$, we can make a neighborhood of check nodes $\mathcal{N}_c(\mathcal{S})$. The graph $\mathcal{G}(S, \{\eta_{c}\}_{c \in [C]}, C, \{{\mathbf{M}}_{c}\}_{c \in [C]})$ is considered an expander graph if $|\mathcal{N}_c(\mathcal{S})| \geq k/2$ for at least 1 group $c$. We can write the probability of this as: 

 \begin{equation}
    \label{eq: expanderbasic}
    \text{Pr}(\mathcal{S} \text{ is not an expander}) \leq {S \choose k} \prod_{c=1}^C {B_c \choose k/2} \left(\frac{k}{2Bc}\right)^k, 
\end{equation}
where ${S \choose k} $ represents the total number of possible subsets that can be made, ${B_c \choose k/2}$ represents the number of possible check node subsets that can be chosen from $B_c$, and $\left(\frac{k}{2Bc}\right)^k$ represents the probability that the subset of check nodes contains all the neighbors of the vertices in $\mathcal{S}$. 
Then we can use the inequality that ${a \choose b} \leq (ae/b)^b$ to rewrite Equation (\ref{eq: expanderbasic}) as:
$$\text{Pr}(\mathcal{S} \text{ is not an expander}) \leq \left(\frac{Se}{k}\right)^k \left(\frac{ke}{2S}\right)^{Ck/2} \prod_{c=1}^C \left(\frac{1}{\eta_c}\right)^{k/2}.$$ This can again be bounded as:
\begin{equation}
   \text{Pr}(\mathcal{S} \text{ is not an expander}) \leq \left(\frac{k}{S}\right)^{k(\frac{C}{2}-1)}c^k, \quad c = e\left(\prod_{c=1}^C \left(\frac{e}{\eta_c}\right)\right)^{1/2}.
\end{equation} Given that $\frac{C}{2} -1 \geq \frac{1}{2}$, we can bound the probability again as: \begin{equation}
    \text{Pr}(\mathcal{S} \text{ is not an expander}) \leq \left(\frac{kc^2}{S}\right)^{k/2}.
\end{equation}  
Clearly, $\mathcal{S}$ being an expander is dependent on the number of vertices $k$. We can take 2 edge cases where $k = O(S)$ and $k = O(1)$: \begin{equation}
    \text{Pr}(\mathcal{S} \text{ is not an expander}) = \begin{cases}
    e^{-S\log(\frac{1}{\epsilon c^2})}, \quad k = \epsilon S \text{ with } \epsilon < \frac{1}{2c^2} \\
    O(1/S), \quad k = O(1).
\end{cases}
\end{equation}
For the very sparse regime $0 \leq \delta \leq 1/3$, we are unlikely to see case 1. Therefore, we can show that $\mathcal{S}$ is an $\epsilon$-expander graph with probability $1 - O(1/S).$ For the less sparse regime, defined in \cite{li2015spright} as $\delta = 1 - 1/C,$ we may have some overlap in the indices of each frequency index $\mathbf{k}, $ which means that the assumption of independence across groups no longer holds. Similar to the proof of Lemma 9 in \cite{li2015spright}, we show that for a given subset $\mathcal{S}$ of left nodes, we can still guarantee the same expander condition that allows peeling to continue by bounding the probability that there exists a subset of left nodes $\mathcal{S}$ such that none of them are singletons. Given an arbitrary left node $\mathbf{k}, $ its right neighbors are all multi-tons if and only if for each $c$, there is some $\mathbf{k}'$ such that $\mathbf{M}_c^T\mathbf{k} = \mathbf{M}_c^T\mathbf{k}'.$ Using the less sparse construction of $\mathbf{M}_c$ detailed in Section B.2 of \cite{li2015spright}, we can construct a set $\mathcal{S}$ such that for every pair $\mathbf{k} \neq \mathbf{k}' \in \mathcal{S}$ one has  $\mathbf{M}_c^T\mathbf{k} = \mathbf{M}_c^T\mathbf{k}'$ for all $c \neq c^*$, while $\mathbf{M}_{c^*}^T \mathbf{k} \neq \mathbf{M}_{c}^T \mathbf{k}'.$ Since there are at least 2 nodes for each group $c \in [C]$ needed to force a multi-ton, we know that the miniumum size of $\mathcal{S}$ is $k \geq 2^C$. Taking the worst case scenario where there are $k/2^{C-1}$ nodes in one group that are in $\mathcal{S}$ (with $2$ nodes connected to the other $C-1$ groups), the total number of possible left nodes $\mathbf{k}$ that allow for this collision is $N/B_c.$ The probability that for an arbitrary set of $k/2^{C-1}$ left nodes to all hash to the same right node in all subsampling groups is found as: \begin{equation}
    \text{Pr}(\mathcal{S} \text{ is not an expander}) \leq {S \choose k}\prod_{c=1}^C {N/B_c \choose k/2^{C-1}} \left( \frac{B_c k}{2^{C-1} N} \right)^k.
\end{equation}
We can once again use the inequality ${a \choose b} \leq (ae/b)^b$ to bound this as: \begin{equation}
     \text{Pr}(\mathcal{S} \text{ is not an expander}) \leq \left(\frac{Se}{k} \right)^k \prod_{c=1}^C \left(\frac{S\eta_ck}{2^{C-1}N} \right)^{k \left(1- \frac{1}{2^{C-1}} \right)}e^{k/2^{C-1}}. 
\end{equation}
Substituting $k = 2^C$ and $S = N^{1 - 1/C}$ gives us the final bound: $$\text{Pr}(\mathcal{S} \text{ is not an expander}) \leq O(S^{-\frac{2^c - 2C}{C-1}}),$$ which is upper bounded by $O(1/S)$ for $C \geq 3.$ 
\end{proof}
If we have an expander graph, the average degree within at least one group $c$ is less than 2, meaning at least 1 group $c$ has a singleton that can be peeled. This allows for another iteration of peeling. The number of samples required for the algorithm scales with $O(P\sum_{c=1}^CB_c) = O(Sn)$. The computational complexity is $O(PB\log B) = O(Sn\log S) = O(Sn\log N)$ since $S = O(N^\delta)$.

\subsection{Proof of Proposition \ref{proposition1}}
\label{proposition1-proof}
\begin{proof}
Given a singleton pair $(\mathbf{k}, F[\mathbf{k}])$ in group $c$, the angle quantization is written as: 
\begin{equation}
    \text{arg}[U_{p}] = \text{arg}[F[\mathbf{k}]] + \left(\sum_{i=1}^n \frac{2\pi}{q_i} (\mathbf{d}_{p})_i k_i \right) + Y_p,
\end{equation}
where $Y_p$ is a uniform random variable over $[-\pi, \pi)$.
Similarly, for $U_{p, r}$, we have: \begin{equation}
     \text{arg}[U_{p, r}] = \text{arg}[F[\mathbf{k}]] + \left(\sum_{i=1}^n \frac{2\pi}{q_i} (\mathbf{d}_{p, r})_i k_i \right) + Y_{p, r}, 
\end{equation}
where $\frac{2\pi}{q_i} (\mathbf{d}_{p, r})_i k_i = \frac{2\pi}{q_i} (\mathbf{d}_{p})_i k_i$ for all $i \neq r$ and $Y_{p, r}$ is another uniform random variable over $[-\pi, \pi)$.
The quantization of the ratio of $U_{p,r}$ and $U_{p}$ is written as, 
\begin{equation}
    \label{argq-ratio}
    \text{arg}_{}[U_{p, r}/U_p] =\frac{2\pi}{q_r}k_r + Y_{p, r} - Y_p, 
\end{equation}
which satisfies:
$$\text{Pr}(|Y_p| \geq \alpha) \leq \text{Pr}(|W_p| \geq |F[\mathbf{k}]| \sin(\alpha))$$ for $0 \leq \alpha \leq \pi/2$. Using the definition of a Rayleigh random variable, the right-hand side is bounded by: \begin{equation}
    \label{rayleigh-bound}
    \text{Pr}(|W_p| \geq F[\mathbf{k}] \sin(\alpha)) \leq \exp\left(-\frac{|F[\mathbf{k}]|^2 \sin^2(\alpha)}{2\sigma^2/B}\right) = \exp\left(-\frac{\eta \sin^2(\alpha)\text{SNR}}{2}\right).
\end{equation}
 The final result is obtained as: \begin{equation}
    \begin{split}
            \text{Pr}(|Y_{p, r} - Y_p| \geq \pi/2q_r) \leq \text{Pr}(|Y_{p,r}| \geq \pi/2q_r) + \text{Pr}(|Y_{p}| \geq \pi/2q_r) \leq \\ 2 \exp\left(-\frac{\eta\sin^2(\pi/2q_r)\text{SNR}}{2}\right).
    \end{split}
\end{equation}
\end{proof}
\subsection{Proof of Theorem \ref{noisy}}
\label{noisy-full-proof}
Throughout this proof, we assume that each $q_i$ is a fixed constant (i.e., $q_i = O(1) \ \forall i \in 1 \dots n$). This lets us assume that $\log N = O(n)$, which will be necessary throughout this proof. We also drop the $c$ notation as all analysis is within a certain group, and is the same regardless of group.
\begin{proposition}
    Let $\mathbb{P}_F$ be the probability of $\mathcal{E},$ the event that the robust bin detection algorithm makes an error in $CS = O(S)$ iterations (i.e. $F \neq \hat{F}$) within some group $c$. The probability of error can be bounded as: $$\mathbb{P}_F \leq O(1/S).$$
\end{proposition}
\begin{proof}
    Using the law of total probability, we rewrite the probability that NR-GFast fails as: \begin{equation*}
        \label{failure-total-probably-nr}
        \text{Pr}(\hat{F} \neq F) = \text{Pr}(\hat{F} \neq F \ | \ \mathcal{E})\text{Pr}(\mathcal{E}) + \text{Pr}(\hat{F} \neq F \ | \ \mathcal{E}^c)\text{Pr}(\mathcal{E}^c),
    \end{equation*}
    which can again be bounded as \begin{equation*}
        \text{Pr}(\hat{F} \neq F) \leq  \text{Pr}(\mathcal{E}) + \text{Pr}(\hat{F} \neq F \ | \ \mathcal{E}^c).
    \end{equation*}
    We can use the result of Theorem \ref{noiseless} showing that $\text{Pr}(\hat{F} \neq F \ | \ \mathcal{E}^c) = O(1/S)$. Next, we must show that $\text{Pr}(\mathcal{E}) \leq O(1/S)$. Let $\mathcal{E}_b$ be the event of an error happening for a specific bin. Then, we can union bound over all bins in $c$ and all iterations to get a bound on $\text{Pr}(\mathcal{E})$: \begin{equation}
        \label{union-bound-error-bin-detection}
        \text{Pr}(\mathcal{E}) \leq \sum_{c=1}^CS^2  \eta_c \text{Pr}(\mathcal{E}_b).
    \end{equation}
    Therefore, to show that $\text{Pr}(\mathcal{E}) \leq O(1/S)$, we must show that $\text{Pr}(\mathcal{E}_b) \leq O(1/S^3)$.
    Using the law of total probability, we rewrite $\text{Pr}(\mathcal{E}_b)$ as: \begin{equation}
    \label{eq: binning-probability-breakup}
    \text{Pr}(\mathcal{E}_b) \leq \sum_{\mathcal{F} \in \{\mathcal{H}_m, \mathcal{H}_Z\}} \text{Pr}(\mathcal{F} \leftarrow \mathcal{H}_S (\mathbf{k}, F[\mathbf{k}])) \text{ } + \sum_{\mathcal{F} \in \{\mathcal{H}_m, \mathcal{H}_Z\}} \text{Pr}(\mathcal{H}_S (\mathbf{k}, F[\mathbf{k}]) \leftarrow \mathcal{F}) + \text{ Pr}((\hat{\mathbf{k}}, \hat{F}[\hat{\mathbf{k}}]) \leftarrow (\mathbf{k}, F[\mathbf{k}])).
    \end{equation}
    Here, the following events are represented: 
    \begin{enumerate}
        \item The event $\mathcal{F} \leftarrow \mathcal{H}_S (\mathbf{k}, F[\mathbf{k}])$ represents a singleton bin being incorrectly classified as a zero-ton or multi-ton. 
        \item The event $\mathcal{H}_S (\mathbf{k}, F[\mathbf{k}]) \leftarrow \mathcal{F}$ represents a zero-ton or multi-ton being incorrectly classified as a singleton.
        \item The event $(\hat{\mathbf{k}}, \hat{F}[\hat{\mathbf{k}}]) \leftarrow (\mathbf{k}, F[\mathbf{k}])$ represents a singleton bin being classified with an incorrect index and value pair.
    \end{enumerate}
    It is shown in Propositions \ref{somethingelse-as-singleton}, \ref{singleton-as-somethingelse}, and \ref{wrong-singleton} that the probabilities of each of these events decrease exponentially with respect to $P_1$, so setting $P_1 = O(n) = O(\log N)$ lets us bound $\text{Pr}(\mathcal{E}_b) \leq O(1/N^3)$, and since $S$ is sublinear in $N$, this can be used to get the original result $\text{Pr}(\mathcal{E}) \leq S^2 \sum_{c=1}^C \eta_c \text{Pr}(\mathcal{E}_b) \leq O(1/S)$. We note that the sample complexity is $O(P \sum_{c=1}^C B_c)= O(P_1nS) = O(Sn^2)$, and the computational complexity is $O(P\sum_{c=1}^CB_c\log B_c) = O(P_1nS \log S) = O(Sn^2\log N)$.
\end{proof}


\begin{proposition}
    \label{somethingelse-as-singleton}
    The probability of the event $\mathcal{H}_S (\mathbf{k}, F[\mathbf{k}]) \leftarrow \mathcal{H}_Z$ is upper bounded by $\exp\left(-\frac{P_1}{2}(\sqrt{1 + 2\gamma} - 1)^2\right)$, and the probability of the event $\mathcal{H}_S (\mathbf{k}, F[\mathbf{k}]) \leftarrow \mathcal{H}_M$ is upper bounded by $4N^2\exp\left(-\epsilon \left(1 - \frac{2 \gamma \nu^2}{L\rho^2}\right)^2 P_1 \right)$.
\end{proposition}
\begin{proof}
    Let us start with the event that a zero-ton is incorrectly identified as a singleton: \begin{equation}
        \text{Pr}(\mathcal{H}_S(\hat{\mathbf{k}}, \hat{F}[\hat{\mathbf{k}}]) \leftarrow \mathcal{H}_Z) \leq \text{Pr}\left(\frac{1}{P} ||\mathbf{W}||^2 \geq  (1 + \gamma)\nu^2\right).
    \end{equation}
    By applying Lemma \ref{lemma:qsft-tail-bound}, we bound the probability of this event as: \begin{equation}
        \text{Pr}(\mathcal{H}_S(\hat{\mathbf{k}}, \hat{F}[\hat{\mathbf{k}}]) \leftarrow \mathcal{H}_Z) \leq \text{Pr}\left(\frac{1}{P_1} ||\mathbf{W}||^2 \geq  (1 + \gamma)\nu^2\right) \leq \exp\left(-\frac{P_1}{2}(\sqrt{1 + 2\gamma} - 1)^2\right),
    \end{equation}
    where we let $\mathbf{g} = \mathbf{0}$ and $\mathbf{v} = \mathbf{W}$. Next, we will examine the event that a multi-ton is incorrectly identified as a singleton. \begin{equation}
        \text{Pr}(\mathcal{H}_S(\hat{\mathbf{k}}, \hat{F}[\hat{\mathbf{k}}]) \leftarrow \mathcal{H}_M) \leq \text{Pr}\left(\frac{1}{P_1} ||\mathbf{g} + \mathbf{W}||^2 \leq  (1 + \gamma)\nu^2\right),
    \end{equation}
    where $\mathbf{g} = \mathbf{S}(\boldsymbol{\alpha} - \hat{F}[\hat{\mathbf{k}}]\mathbf{e}_{\hat{\mathbf{k}}})$. This can again be upper bounded using the law of total probability: \begin{equation}
        \label{multiton-as-singleton}
        \text{Pr}(\mathcal{H}_S(\hat{\mathbf{k}}, \hat{F}[\hat{\mathbf{k}}]) \leftarrow \mathcal{H}_M) \leq \text{Pr}\left(\frac{1}{P_1} ||\mathbf{g} + \mathbf{W}||^2 \leq  (1 + \gamma)\nu^2 \,\big|\, \frac{\|\mathbf{g}\|^2}{P_1} \geq 2\gamma\nu^2  \right) + \text{Pr} \left(\frac{\|\mathbf{g}\|^2}{P_1} \leq 2\gamma\nu^2 \right).
    \end{equation}
    From Lemma \ref{lemma:qsft-tail-bound}, we know that the left term can be bounded as $\exp(-\frac{P_1 \gamma^2}{2 + 8\gamma})$. For the second term, let $\boldsymbol{\beta} = \boldsymbol{\alpha} - \hat{F}[\hat{\mathbf{k}]}\mathbf{e}_{\hat{\mathbf{k}}}$, and label $\mathbf{g} = \mathbf{S}\boldsymbol{\beta}$. Defining $\mathcal{L} \coloneqq \text{supp}(\boldsymbol{\beta})$, let $\mathbf{S}_{\mathcal{L}}$ refer to the corresponding columns of $\mathbf{S}$ and $\boldsymbol{\beta}_{\mathcal{L}}$ refer to the corresponding elements of $\boldsymbol{\beta}$ such that $\mathbf{S}\boldsymbol{\beta} = \mathbf{S}_{\mathcal{L}} \boldsymbol{\beta}_{\mathcal{L}}$. We consider two scenarios: 
    \begin{itemize}
        \item Case 1: $L = | \mathcal{L}| = O(1)$, i.e., the number of multi-tons is some arbitrary constant. We can bound the magnitude of $\| \mathbf
        S \boldsymbol{\beta} \|^2$ as: $$\lambda_{min} (\mathbf{S}_{\mathcal{L}}^H \mathbf{S}_{\mathcal{L}}) \| \boldsymbol{\beta}_{\mathcal{L}} \|^2 \leq \| \mathbf
        S_{\mathcal{L}} \boldsymbol{\beta}_{\mathcal{L}} \|^2 \leq \lambda_{max} (\mathbf{S}_{\mathcal{L}}^H \mathbf{S}_{\mathcal{L}}) \| \boldsymbol{\beta}_{\mathcal{L}} \|^2,$$
        which lets us bound $\text{Pr} \left(\frac{\|\mathbf{g}\|^2}{P_1} \leq 2\gamma\nu^2 \right) \leq \text{Pr}(\lambda_{min} (\frac{1}{P_1}\mathbf{S}_{\mathcal{L}}^H \mathbf{S}_{\mathcal{L}}) \leq 2\gamma\nu^2/(L\rho^2))$.
        The matrix $\frac{1}{P_1}\mathbf{S}_{\mathcal{L}}^H \mathbf{S}_{\mathcal{L}} \in \mathbb{C}^{L \times L}$ is a Hermitian matrix, which lets us bound $\lambda_{min} (\frac{1}{P_1}\mathbf{S}_{\mathcal{L}}^H \mathbf{S}_{\mathcal{L}})$ using the Gershgorin circle theorem: $$\lambda_{min} (\frac{1}{P_1}\mathbf{S}_{\mathcal{L}}^H \mathbf{S}_{\mathcal{L}}) \geq 1-L\mu,$$ where $\mu$ is defined as seen in Lemma \ref{lemma: mutual-coherence}.
        We can then further bound the probability: \begin{equation}
            \text{Pr} \left(\frac{\|\mathbf{g}\|^2}{P_1} \leq 2\gamma\nu^2 \right)  \leq \text{Pr}(1-L\mu \leq 2\gamma\nu^2/(L\rho^2)) = \text{Pr}\left(\mu \geq \frac{1}{L} \left(1 - \frac{2 \gamma \nu^2}{L\rho^2} \right)\right).
        \end{equation}
        Using the results of Lemma \ref{lemma: mutual-coherence} and setting $\mu_0 = \frac{2 \gamma \nu^2}{L\rho^2}$, we get the final bound: 
        \begin{equation}
        \label{prop3eqn}
             \text{Pr} \left(\frac{\|\mathbf{g}\|^2}{P_1} \leq 2\gamma\nu^2 \right) \leq 4N^2 \exp\left(-\frac{P_1}{8L^2} \left(1 - \frac{2 \gamma \nu^2}{L\rho^2}\right)^2 \right),
        \end{equation} where $\gamma < \frac{L\rho^2}{2\nu^2}$.
        \item Case 2: $L = |\mathcal{L}| = \omega(1)$. In this case, both $\boldsymbol{\beta}$ and $\mathbf{S}$ are made of fully random entries, meaning that the random vector $\mathbf{S}\boldsymbol{\beta}$ becomes Gaussian due to the central limit theorem with mean $0$ and covariance $\mathbb{E}[\mathbf{gg}^H] = L\rho^2 \mathbf{I}.$ This lets us apply Lemma \ref{lemma:qsft-tail-bound} to $\text{Pr} \left(\frac{\|\mathbf{g}\|^2}{P_1} \leq 2\gamma\nu^2 \right)$: 
        \begin{equation}
            \text{Pr} \left(\frac{\|\mathbf{g}\|^2}{P_1} \leq 2\gamma\nu^2 \right) \leq \exp\left(-\frac{P_1}{2} \left(1 - \frac{2 \gamma \nu^2}{L\rho^2}\right) \right),
        \end{equation}
        where $\gamma < \frac{L\rho^2}{2\nu^2}$. \\\\\ 
        Combining the bounds of Case 1 and 2, it is possible to find an absolute constant $\epsilon > 0$ that satisfy: \begin{equation}
            \label{final-multiton-bound}
            \text{Pr} \left(\frac{\|\mathbf{g}\|^2}{P_1} \leq 2\gamma\nu^2 \right) \leq 4N^2 \exp\left(-\epsilon\frac{P_1}{2} \left(1 - \frac{2 \gamma \nu^2}{L \rho^2}\right)^2\right),
        \end{equation}
        for $\gamma < \frac{\rho^2}{2\nu^2}$, where Equation (\ref{final-multiton-bound}) follows from~\cite{erginbas2023efficiently}.
    \end{itemize}
\end{proof}

\begin{lemma}
    \label{lemma: mutual-coherence}
    The mutual coherence of $\mathbf{S}$ is denoted as $\mu \coloneqq \max_{\mathbf{m} \neq \mathbf{k}} \frac{1}{P_1} |\mathbf{s}_{\mathbf{k}}^H \mathbf{s}_{\mathbf{m}}|$. For some $\mu_0 > 0$, the following inequality holds: \begin{equation}
       \text{Pr}(\mu > \mu_0) \leq 4N^2\exp\left(-\frac{\mu_0^2}{8}P_1\right),
    \end{equation}
    where $N$ is the number of possible vectors.
\end{lemma}
\begin{proof}
    The $p$-th entry of $\mathbf{s}_{\mathbf{k}}$ is denoted as $\prod_{i=1}^n \omega_{q_i}^{(\mathbf{d}_{p})_ik_i}$. If we let $\mathbf{y}$ denote $\mathbf{s}_{\mathbf{m}}^H \mathbf{s}_{\mathbf{k}}$, then $$\mathbf{y}_p = \prod_{i=1}^n \omega_{q_i}^{(\mathbf{d}_{p})_i (k_i - m_i)}.$$
    Since $(\mathbf{k} - \mathbf{m}) \mod \mathbf{q} \neq \mathbf{0}$ for all $\mathbf{k} \neq \mathbf{m}$, we know that each entry of $\mathbf{y}$ is a term from a uniform distribution $\{ \prod_{i=1}^n \omega_{q_i}^{(\mathbf{d}_{p})_i (k_i - m_i)} : (\mathbf{d}_{p})_i \in \mathbb{Z}_{q_i}, i = 1,\ldots,n \}$. From Lemma \ref{lemma: zero-vector-innerprod}, it is clear that this uniform selection implies that $\mathbb{E}[\mathbf{y}] = \mathbf{0}_{P_1}$. With this knowledge, we can apply the Hoeffding bound:
    \begin{equation}
        \text{Pr} \left(\frac{|\mathbf{s}_{\mathbf{m}}^H\mathbf{s}_{\mathbf{k}}|}{P_1} \geq \mu_0 \right) = \text{Pr}(\mathbf
        |{\mathbf{y}}| \geq P_1 \mu_0) \leq 2\text{Pr}(\text{Re}(\mathbf{y}) \geq P_1 \mu_0/\sqrt{2}) \leq 4\exp\left(-\frac{\mu_0^2}{8}P_1\right).
    \end{equation}
    The result is then be obtained by union bounding over all combinations of $\mathbf{m}$ and $\mathbf{k}$.
\end{proof} 
\begin{lemma}
    \label{lemma:qsft-tail-bound}
    (Restated from Lemma 2 of~\cite{erginbas2023efficiently}) Given $\mathbf{g} \in \mathbb{C}^P$ and a complex Gaussian vector $\mathbf{v} \sim \mathcal{CN}(0, \nu^2 \mathbf{I})$, the following tail bounds hold: $$\text{Pr}\left(\frac{1}{P}\|\mathbf{g} + \mathbf{v}\|^2 \geq \tau_1 \right) \leq \exp\left(-\frac{P}{2} \left(\sqrt{2\tau_1/\nu^2 - 1} - \sqrt{1+2\theta_0} \right)^2\right),$$
    $$\text{Pr}\left(\frac{1}{P}\|\mathbf{g} + \mathbf{v}\|^2 \leq \tau_2 \right) 
\leq \exp\left(-\frac{P}{2} \frac{\left(1 + \theta_0 - \tau_2/\nu^2 \right)^2}{1 + 2\theta_0}\right),$$
    for any $\tau_1, \tau_2$ that satisfy $$\tau_1 \geq \nu^2(1+\theta_0)^2 \geq \tau_2,$$ where $$\theta_0 = \frac{\|\mathbf{g}\|^2}{P\nu^2}.$$
\end{lemma}

\begin{proposition} (Adapted from Proposition 5 of~\cite{erginbas2023efficiently})
    \label{singleton-as-somethingelse}
    The probability of the event $\mathcal{H}_Z \leftarrow \mathcal{H}_S (\mathbf{k}, F[\mathbf{k}])$ is upper bounded by $\exp\left(-\frac{P_1}{2}\left(\frac{(\rho^2/\nu^2 - \gamma)^2}{1 + 2\rho^2/\nu^2}\right)\right)$, and the probability of the event $\mathcal{H}_M \leftarrow \mathcal{H}_S (\mathbf{k}, F[\mathbf{k}])$ is upper bounded by $\exp\left(-\frac{P_1}{2}(\sqrt{1 + 2\gamma} - 1)^2\right) + 2 \exp \left(-\frac{P_1 \rho^2 \sin^2(\pi/\kappa)}{2 \nu^2}\right) + 2n\exp\left(-\frac{1}{2q_{\text{max}}}\epsilon^2 P_1 \right)$, where $q_{\text{max}} \coloneqq \max \mathbf{q}$.
\end{proposition}
\begin{proof}
    We will begin with the case where a singleton is mistakenly identified as a zero-ton:
    \begin{equation}
        \text{Pr}(\mathcal{H}_Z \leftarrow \mathcal{H}_S(\mathbf{k}, F[\mathbf{k}]) \leq \text{Pr}\left(\frac{1}{P_1} \| F[\mathbf{k}]\mathbf{s}_{\mathbf{k}} + \mathbf{W}\|^2 \leq (1 + \gamma)\nu^2 \right).
    \end{equation}
    Using Lemma \ref{lemma:qsft-tail-bound} with $\mathbf{g} = F[\mathbf{k}]\mathbf{s}_{\mathbf{k}}$ and $\mathbf{v} = \mathbf{W}$, we get the final bound: \begin{equation}
        \label{eq: singleton-as-zeroton-final}
        \text{Pr}(\mathcal{H}_Z \leftarrow \mathcal{H}_S(\mathbf{k}, F[\mathbf{k}]) \leq \text{Pr}\left(\frac{1}{P_1} \| F[\mathbf{k}]\mathbf{s}_{\mathbf{k}} + \mathbf{W}\|^2 \leq (1 + \gamma)\nu^2 \right) \leq \exp\left(-\frac{P_1}{2}\left(\frac{(\rho^2/\nu^2 - \gamma)^2}{1 + 2\rho^2/\nu^2}\right)\right),
    \end{equation}
    which holds for $\gamma < \rho^2/\nu^2$.
    Next we will examine the case where a singleton is mistakenly identified as a multi-ton, where we want to find the probability of the event: $$\text{Pr}(\mathcal{H}_M \leftarrow \mathcal{H}_S(\mathbf{k}, F[\mathbf{k}]) = \text{Pr} \left( \frac{1}{P} \|\mathbf{U} -  \hat{F}[\hat{\mathbf{k}}]\mathbf{s}_{\hat{\mathbf{k}}}\|^2 \geq (1 + \gamma)\nu^2 \right),$$ which we will denote as $\mathcal{E}$. Then, using the law of total probability, we write: \begin{equation}
        \text{Pr}(\mathcal{E}) \leq \text{Pr}\left(\mathcal{E} \ | \ \hat{\mathbf{k}} = \mathbf{k} \ \bigcap \ \hat{F}[\hat{\mathbf{k}}] = F[\mathbf{k}] \right) + \text{Pr} \left(\hat{\mathbf{k}} \neq \mathbf{k} \ \bigcup \ \hat{F}[\hat{\mathbf{k}}] \neq F[\mathbf{k}] \right)
        \end{equation}
    The first term is the same as the probability that a zero-ton is misclassified as a singleton, which we know from Proposition \ref{somethingelse-as-singleton} can be upper bounded by $\exp\left(-\frac{P_1}{2}(\sqrt{1 + 2\gamma} - 1)^2\right)$. The second term, using a union bound, can again be bounded as: $$ \text{Pr} \left(\hat{\mathbf{k}} \neq \mathbf{k} \ \bigcup \ \hat{F}[\hat{\mathbf{k}}] \neq F[\mathbf{k}] \right)
        \leq \text{Pr}(\hat{\mathbf{k}} \neq \mathbf{k}) + \text{Pr}(\hat{F}[\hat{\mathbf{k}}] \neq F[\mathbf{k}]) \leq 2\text{Pr}(\hat{\mathbf{k}} \neq \mathbf{k})+ \text{Pr}(\hat{F}[\hat{\mathbf{k}}] \neq F[\mathbf{k}] \ | \ \hat{\mathbf{k}} = \mathbf{k}).$$ The first term is given by Lemma \ref{singleton-search}, while the second term is the error probability of a $\kappa$-point PSK signal, with symbols from $supp(F) =\mathcal{X} \coloneqq \{\rho, \rho \phi, \rho \phi^2 , \cdots , \rho \phi^{\kappa-1}\}$. The symbol error probability of this PSK signal can be used to bound the second term: \begin{equation}
        \text{Pr}(\hat{F}[\hat{\mathbf{k}}] \neq F[\mathbf{k}] \ | \ \hat{\mathbf{k}} \neq \mathbf{k}) \leq 2 \exp \left(-\frac{P_1 \rho^2 \sin^2(\pi/\kappa)}{2 \nu^2}\right).
    \end{equation}
\end{proof}

\begin{proposition}
    \label{wrong-singleton}
    The probability of the event  $(\hat{\mathbf{k}}, \hat{F}[\hat{\mathbf{k}}]) \leftarrow (\mathbf{k}, F[\mathbf{k}])$ is upper bounded by: $$4N^2\exp\left(-\frac{P_1}{32} \left(1 - \frac{ \gamma \nu^2}{\rho^2}\right)^2 \right) + \exp\left(-\frac{P_1 \gamma^2}{2 + 8\gamma} \right).$$
\end{proposition} This event occurs when a singleton $(\hat{\mathbf{k}}, \hat{F}[\hat{\mathbf{k}}])$ different from $(\mathbf{k}, F[\mathbf{k}])$ passes the singleton verification test, which occurs when the following inequality is satisfied: $$\frac{1}{P} || \mathbf{U} - \hat{F}[\hat{\mathbf{k}}] \mathbf{s}_{\hat{\mathbf{k}}} ||^2 \leq (1 + \gamma)\nu^2.$$ Using the law of total probability, this inequality can be rewritten as:
\begin{equation}
    \begin{split}
        \text{Pr}((\hat{\mathbf{k}}, \hat{F}[\hat{\mathbf{k}}]) \leftarrow (\mathbf{k}, F[\mathbf{k}])) &= \text{Pr}\left(\frac{1}{P_1}\Big\|F[\mathbf{k}]\mathbf{s}_{\mathbf{k}} - \hat{F}[\hat{\mathbf{k}}] \mathbf{s}_{\hat{\mathbf{k}}} + \mathbf{W}\Big\|^2 \leq (1 + \gamma)\nu^2 \right) \\
        &\leq \text{Pr}\left(\frac{1}{P_1}\Big\|\mathbf{S}\boldsymbol{\beta} + \mathbf{W}\Big\|^2 \leq (1 + \gamma)\nu^2 \,\Big|\, \frac{1}{P_1}\Big\|\mathbf{S}\boldsymbol{\beta} \Big\|^2 \geq 2\nu^2\gamma \right) \\
        &\quad + \text{Pr}\left (\frac{1}{P_1}\Big\| \mathbf{S}\boldsymbol{\beta} \Big\|^2 \leq 2\nu^2\gamma \right).
    \end{split}
\end{equation}
From Lemma \ref{lemma:qsft-tail-bound}, the first term can be upper bounded as $\exp(-\frac{P_1 \gamma^2}{2 + 8\gamma})$. $\boldsymbol{\beta}$ has 2 non-zero terms selected from $\mathcal{X}$. This is because $F[\mathbf{k}]$ and $\hat{F}[\hat{\mathbf{k}}]$ are non-zero and do not cancel each other out when computing $F[\mathbf{k}]\mathbf{s}_{\mathbf{k}} - \hat{F}[\hat{\mathbf{k}}] \mathbf{s}_{\hat{\mathbf{k}}}$. Since the size of $\text{supp}(\boldsymbol{\beta})$ is $L = 2$, we can apply the same bounds as applied in Case 1 of Proposition \ref{somethingelse-as-singleton}, Equation (\ref{prop3eqn}): \begin{equation}
    \text{Pr}\left (\frac{1}{P_1}\Big\| \mathbf{S}\boldsymbol{\beta} \Big\|^2 \leq 2\nu^2\gamma \right) \leq 4N^2\exp\left(-\frac{P_1}{32} \left(1 - \frac{ \gamma \nu^2}{\rho^2}\right)^2 \right)
\end{equation}
for $\gamma < \rho^2/\nu^2$.
\begin{lemma}
    \label{singleton-search}
    The singleton search error probability of NR-GFast is upper bounded as: \begin{equation}
        \text{Pr}(\hat{\mathbf{k}} \neq  \mathbf{k}) \leq 2n\exp\left(-\frac{1}{2q_{\text{max}}}\epsilon^2 P_1 \right),
    \end{equation}
    where $q_{\text{max}} \coloneqq \max \mathbf{q}$, and for some constant $\epsilon > 0$.
\end{lemma}
\begin{proof}
    Referring to Proposition \ref{proposition1}, we know that $\text{Pr}(Z_{p, r} = 0) = p_0$, where $p_0 - p_i \geq \epsilon$ for all $i \neq 0$. For some $r$, the probability of the incorrect $\hat{k}_r$ can be bounded with: \begin{equation}
    \label{singleton-search-error-firstbound}
    \begin{split}
        \text{Pr}(\hat{k}_r \neq k_r) = \text{Pr}\left(\argmax_{a \in \mathbb{Z}_{q_r}} \sum_{p \in [P_1]}\mathbbm{1}\{a = \text{arg}_{q_r}[U_{p,r}/U_p]\} \neq \hat{k}_r\right)\\ = \text{Pr}\left(\argmax_{a \in \mathbb{Z}_{q_r}} \sum_{p \in [P_1]}\mathbbm{1}\{a = Z_{p,r}\} \neq 0\right). 
    \end{split}
    \end{equation} 
    Let $\mathbf{p} \coloneqq \{p_0, \cdots , p_{q_r - 1}\}$, and let $\hat{\mathbf{p}} = \frac{1}{n} \text{Multinomial}(P_1, \mathbf{p})$. Then, we bound Equation (\ref{singleton-search-error-firstbound}) with: \begin{equation}
    \begin{split}
        \text{Pr}\left(\argmax_{a \in \mathbb{Z}_{q_r}} \sum_{p \in [P_1]}\mathbbm{1}\{a = Z_{p,r}\} \neq 0\right)  \leq \text{Pr}(\|\mathbf{p} - \hat{\mathbf{p}}\| \geq \epsilon) \\ \leq 2\exp\left(-\frac{1}{2q_r} \epsilon^2 P_1 \right),
    \end{split}
    \end{equation}
    where the last line comes from Lemma 5 of~\cite{erginbas2023efficiently}. Union bounding over all $n$ bits in $\mathbf{k}$, we get the final result: \begin{equation}
        \label{final-singleton-search-bound}
        \text{Pr}(\hat{\mathbf{k}} \neq \mathbf{k}) \leq \sum_{r=1}^n 2\exp\left(-\frac{1}{2q_r} \epsilon^2 P_1 \right) \leq 2n\exp\left(-\frac{1}{2q_{\text{max}}} \epsilon^2 P_1 \right).
    \end{equation}
\end{proof}
\end{document}